\documentclass[11pt]{article}

\usepackage{fullpage} 
\usepackage{comment}
\usepackage{amsthm}
\usepackage{tikz}
\usetikzlibrary{positioning,calc,arrows.meta}
\usepackage{pgfplots}
\pgfplotsset{compat=1.18}
\usetikzlibrary{patterns}
\usepgfplotslibrary{fillbetween}
\usetikzlibrary{intersections}
\usepackage{pgfplots}

\usepackage{csquotes}

\usepackage[dvipsnames]{xcolor}

\usepackage{bigints}
\usepackage{amsmath,amssymb}
\usepackage{ifthen}   %
\usepackage{bbm}      %
\usepackage{xcolor}
\usepackage{tcolorbox}

\usepackage{thm-restate}

\usepackage{natbib}
\bibpunct{(}{)}{;}{a}{,}{,}

\usepackage[hidelinks]{hyperref}       %

\usepackage{comment}
\newcommand{\xhdr}[1]{\vspace{2mm}\noindent{\bf {#1}\ }}

\usepackage{url}            %
\usepackage{booktabs}       %
\usepackage{amsfonts}       %
\usepackage{nicefrac}       %
\usepackage{microtype}      %
\usepackage{dsfont}

\usepackage{mathtools, bm}

\usepackage{algorithm}
\usepackage{algpseudocode}

\usepackage{subcaption}
\usepackage{multirow}
\usepackage{float}
\usepackage{xspace}

\usepackage{enumitem}

\usepackage{xfrac}

\usepackage{cleveref}
\crefname{enumi}{part}{parts}

\theoremstyle{plain}
\newtheorem{theorem}{Theorem}[section]
\newtheorem{lemma}[theorem]{Lemma}
\newtheorem{claim}[theorem]{Claim}

\newtheorem{proposition}[theorem]{Proposition}
\newtheorem{corollary}[theorem]{Corollary}%

\newtheorem{informal}{Informal Theorem}

\theoremstyle{plain}
\newtheorem{definition}{Definition}[section] %

\allowdisplaybreaks

\usepackage{etoolbox}
\preto\part{\setcounter{section}{0}}

\newcommand{\dom}{\Omega}                        %
\newcommand{\metricspace}{\parent{\dom,\dst}}         %
\newcommand{\dst}{d}                             %
\newcommand{\norm}[1]{\left\lVert#1\right\rVert}  %
\newcommand{\dimension}{D}                       %
\newcommand{\agents}{N}                          %
\newcommand{\prof}{\mathbf{x}}                   %
\newcommand{\loc}[1]{x_{#1}}                     %
\newcommand{\pair}{\bm{\ell}}                    %
\newcommand{\pairs}[1]{\Delta\parent{#1^{2}}}    %

\newcommand{\cost}{\mathrm{cost}}                %
\DeclareMathOperator{\SC}{SC}                    %
\DeclareMathOperator{\OPT}{OPT}                  %
\newcommand{\ratio}{\rho}                        %
\newcommand{\Mech}{f}                            %
\newcommand{\Prop}{\mathsf{P}}                   %
\newcommand{\GPair}{\mathsf{G}}                  %
\newcommand{\Mix}[1]{\mathsf{M}_{#1}}            %
\newcommand{\pw}{W}                              %

\newcommand{\devd}{r}                            %
\newcommand{\tdist}[1]{u_{#1}}                   %
\newcommand{\rdist}[1]{u\primed_{#1}}                   %
\newcommand{\odist}[2]{d_{#1#2}}                 %
\newcommand{\Ssum}{S}                            %
\newcommand{\Tsum}{S\primed}                            %
\newcommand{\Aterm}{A}                           %
\newcommand{\Bterm}{B}                           %
\newcommand{\degr}[1]{S_{#1}}                    %

\newcommand{\lbprofA}{A}                         %
\newcommand{\lbprofB}{B}                         %
\newcommand{\Rlb}[2]{R\parent{#1,#2}}            %
\newcommand{\lbfun}{f}                           %
\newcommand{\lbconst}{\frac{1+\sqrt2}{2}}        %
\newcommand{\ofac}[1]{\ell^{*}_{#1}}               %
\newcommand{\clus}[1]{L_{#1}}                     %
\newcommand{\nclus}[1]{n_{#1}}                    %
\newcommand{\sep}{\delta}                         %
\newcommand{\ocost}[1]{c_{#1}}                   %
\newcommand{\Rsum}{C^{*}}                        %
\newcommand{\Qsum}{Q}                            %
\newcommand{\dispersion}{h}                     %

\newcommand{\kcon}{\kappa}                       %

\newcommand{\lamstar}{\lambda_{*}}               %
\newcommand{\rhostar}{\ratio_{*}}                %
\newcommand{\setsize}[1]{{\left|#1\right|}}
\newcommand{\set}[1]{\left\{#1\right\}}
\newcommand{\setfix}[1]{\{#1\}}
\newcommand{\parent}[1]{\left(#1\right)}
\newcommand{\parentfix}[1]{(#1)}

\newcommand{\lranglefix}[1]{\langle#1\rangle}

\newcommand{\deq}{\triangleq}

\newcommand{\reals}{\mathbb{R}}

\newcommand{\condition}{\,\mid\,}

\DeclareMathOperator*{\prb}{\mathrm{Pr}}
\newcommand{\prob}[2][]{\prb\ifthenelse{\not\equal{}{#1}}{\nolimits_{#1}}{}\!\left[{\def\givenn{\middle|}#2}\right]}
\newcommand{\expect}[2][]{\mathbb{E}\ifthenelse{\not\equal{}{#1}}{_{#1}}{}\!\left[{\def\givenn{\middle|}#2}\right]}

\newcommand{\tparen}{\big}
\newcommand{\tprob}[2][]{\text{Pr}\ifthenelse{\not\equal{}{#1}}{_{#1}}{}\tparen[{\def\given{\tparen|}#2}\tparen]}
\newcommand{\texpect}[2][]{\mathbb{E}\ifthenelse{\not\equal{}{#1}}{_{#1}}{}\tparen[{\def\given{\tparen|}#2}\tparen]}

\newcommand{\sprob}[2][]{\text{Pr}\ifthenelse{\not\equal{}{#1}}{_{#1}}{}[#2]}
\newcommand{\sexpect}[2][]{\mathbb{E}\ifthenelse{\not\equal{}{#1}}{_{#1}}{}[#2]}

\newcommand{\btheta}[1]{\Theta\parent{#1}}

\newcommand{\primed}{^{\prime}}

\title{Breaking the 4-Approximation Barrier in Strategyproof Two-Facility Location}

\author{
Mengfan Ma\thanks{Central China Normal University, China. Email: \url{mengfanma1@gmail.com}}
\and
Bo Peng\thanks{Shanghai University of Finance and Economics, China. Email: \url{ahqspbo@gmail.com}}
}

\date{ }

\begin{document}

\maketitle

\begin{abstract}
We study strategyproof mechanism design without transfers for the two-facility location problem in metric spaces. A mechanism selects two facility locations based on agents’ reported locations; each agent incurs her distance to the nearer facility, and the objective is to minimize the expected social cost. A mechanism is strategyproof if no agent ever benefits from misreporting her location. The best approximation ratio achieved by a randomized strategyproof mechanism has been $4$, attained by the Proportional mechanism of Lu, Sun, Wang, and Zhu (EC 2010), and the best lower bound has been $1.045$, due to Lu, Wang, and Zhou (WINE 2009). Neither bound has moved since then, even on the line $\mathbb{R}$.

We improve both bounds. Our main result is a randomized strategyproof mechanism with approximation ratio $11/3\approx3.667$ on every Ptolemaic metric space, a rich class containing all Euclidean spaces. The mechanism randomizes between the Proportional mechanism and a new mechanism that we call Global Pair. Global Pair draws an unordered pair of agents with probability proportional to their distance and opens facilities at their reported locations. Although Global Pair and Proportional each have approximation ratio $4$, the two mechanisms attain their worst-case approximation ratios on complementary instances. Randomizing between them balances these complementary weaknesses and breaks the $4$-approximation barrier. On the lower bound side, we construct a new two-profile instance that yields a lower bound of $(1+\sqrt{2})/2\approx 1.207$, improving upon the previous lower bound of $1.045$.
\end{abstract}

\section{Introduction}
\label{sec:intro}
A planner must open facilities to serve $n$ agents located in a metric
space.  Each agent reports a location, the planner opens the facilities as a
function of the reports, and an agent's cost is her distance to the nearest facility.
The planner wishes to minimize the social cost, the sum of the agent
costs, but cannot verify the reports and cannot use monetary transfers.  A
mechanism is \emph{strategyproof} if no agent ever benefits from misreporting,
and $\ratio$-approximate if its social cost is always within a factor $\ratio$ of the offline optimum.
Following \citet{PT-09}, the object of study
is to determine the best achievable approximation ratio of a strategyproof mechanism.

For the one-facility game, the median mechanism is strategyproof and optimal on
the line $\reals$ \citep{moulin1980strategy,PT-09}. 
In the plane $\reals^{2}$, the tight deterministic approximation ratio is $\sqrt{2}$ \citep{bespamyatnikh2000mobile,meir2019strategyproof,goel2023optimality}. 
Recently, \citet{barak2026facility} showed that this deterministic $\sqrt{2}$ ratio in $\reals^{2}$ can be improved to $4/\pi\approx 1.273$ by randomization.
In higher-dimensional Euclidean spaces
$\reals^{\dimension}$, the approximation ratio of the coordinate-wise median
was first bounded by $\sqrt{\dimension}$ \citep{meir2019strategyproof} and later
improved to $1.547$ \citep{gravin2025approximation}. Beyond Euclidean spaces, \citet{alon2009strategyproof} studied strategyproof one-facility location on network domains, including circles and trees. 

The two-facility game behaves quite differently.  On the line $\reals$,
\citet{PT-09} proposed the simple Two-Extremes mechanism, which opens one
facility at the leftmost reported location and the other at the rightmost
reported location.  This mechanism has approximation ratio $n-2$, and is
optimal among all deterministic strategyproof mechanisms \citep{lu2010asymptotically,fotakis2013power}.  \citet{lu2010asymptotically}
showed that randomization can break this linear barrier: their
\emph{Proportional mechanism} achieves approximation ratio $4$ on every metric
space.  The mechanism first selects an anchor agent uniformly at random, then
selects a second agent with probability proportional to her distance from the
anchor, and opens facilities at the two selected agents' locations.  On the
negative side, \citet{lu2009tighter} proved a lower bound of $1.045$ for
randomized strategyproof mechanisms on the line.  Closing the gap between
$1.045$ and $4$ in the line $\reals$ is the first open problem listed by
\citet{lu2010asymptotically}.
Neither endpoint had improved since, even on the line, let alone in higher-dimensional Euclidean spaces.

\subsection{Our contribution and techniques}
This paper improves both the upper and lower bounds for the two-facility game. 
In particular, we break the barrier of $4$ for the first time. 
We give a randomized strategyproof mechanism with approximation ratio $11/3\approx 3.667$ on every \emph{Ptolemaic} metric space --- a rich class that contains every Euclidean space, every Hilbert space, every $\mathrm{CAT}(0)$ space and every metric tree \citep{foertsch2007ptolemy}. 
This is achieved by a mixture of the Proportional mechanism and a simple mechanism we call the \emph{Global Pair} mechanism, which selects an unordered pair of agents with probability proportional to their distance and opens facilities at their locations.
On the other hand, we raise the lower bound to $(1+\sqrt{2})/2\approx 1.207$ by a two-profile instance in $\reals$.

Recall that the Proportional mechanism draws an anchor
agent uniformly at random and then draws a second agent with probability
proportional to her distance from the anchor. 
There is an asymmetry between the two draws: the first agent is drawn uniformly, while the second is drawn with probability proportional to distance from the first. 
Our proposed Global Pair mechanism removes the asymmetry between the two draws.

\begin{informal}[Global Pair mechanism; \Cref{def:gp} and \Cref{thm:gp-sp}]
The \emph{Global Pair} mechanism opens facilities at the reported locations of an unordered
pair $\set{i,j}$ drawn with probability proportional to $\dst(\loc{i},\loc{j})$.
It is strategyproof on every Ptolemaic metric space, in
particular on every Euclidean space.
\end{informal}

Taken alone, the Global Pair mechanism cannot give any improvement: like the Proportional
mechanism, its exact approximation ratio is $4$.
What makes the it useful is that the two mechanisms attain factor $4$ on
\emph{complementary} worst-case instances.  
The Proportional mechanism is tight
only when the optimal solution of the instance is highly concentrated, with almost all agents
incurring essentially zero cost.  By contrast, the Global Pair mechanism is
tight only when the optimal cost is spread out across many agents.  
Moreover, on the worst instances for either mechanism, the other achieves ratio at most $3$. We capture this complementarity by a instance-dependent parameter that measures the spread of the instance relative to a chosen optimal facility pair. 
In particular, in a metric space $(\dom,d)$, for an agent location profile
$\prof\in\dom^n$, fix an optimal facility location, let $\ocost{i}$ be the distance from agent $i$ to her nearest facility, and let $\sep$ be the distance between the two facilities. Then the \emph{dispersion} of $\prof$, denoted by $\dispersion(\prof)$, is defined as
\begin{align*}
  \dispersion(\prof)\deq
  \dfrac
  {\sum_{i\in\agents}\ocost{i}\cdot{\min\set{\ocost{i}/\sep,1}}}
  {\sum_{i\in\agents}\ocost{i}}.
\end{align*} 
The dispersion is always in $[0,1]$, and it is small when the agents are concentrated near one of the two facilities, and large when the agents are spread out across both facilities. Next we show that the approximation ratios of the two mechanisms are monotone in the dispersion, but in opposite directions, as stated in the following informal theorem.

\begin{informal}[\Cref{thm:gp-interp,thm:prop-interp}]
On every location profile $\prof\in\dom^n$, the Global Pair mechanism has approximation ratio at most $3+\dispersion(\prof)$, while the Proportional mechanism has approximation ratio at most $4-\dispersion(\prof)/2$. 
\end{informal}

This characterization of the approximation ratio by the dispersion may be of independent interest in other variants of the facility location games.
Then we randomly mix the two mechanisms, running the Proportional mechanism with probability $2/3$ and the Global Pair mechanism with probability $1/3$, which immediately gives the following upper bound, which is main result of this paper.

\begin{informal}[\Cref{thm:mixture-exact}]
Let $\Mix{2/3}$ be the mechanism that runs the Proportional mechanism with probability $2/3$ and the Global Pair mechanism with probability $1/3$. 
Then $\Mix{2/3}$ is strategyproof on every Ptolemaic metric space and has approximation ratio exactly $11/3$.
\end{informal}

This is the first strategyproof two-facility mechanism with ratio below $4$.
The approximation ratio analysis actually only requires the triangle inequality, so it holds on every metric space. The truthfulness holds on every Ptolemaic metric space, which includes every Euclidean space. In \Cref{app:gp-counterexample}, we give a counterexample showing that the Global Pair mechanism is not strategyproof on every metric space, and hence the mixture is not strategyproof on every metric space either.

Our second contribution is a new lower bound of $(1+\sqrt{2})/2\approx 1.207$ for the two-facility game, which improves the previous lower bound of $1.045$ by \citet{lu2009tighter}, as stated in the following informal theorem.

\begin{informal}[\Cref{thm:lb-asymptotic}]
Any (randomized) strategyproof mechanism for the two-facility game has approximation ratio at least $(1+\sqrt{2})/2$.
\end{informal}

\begin{figure}[H]
  \centering
  \newcommand{\lbcoordlabelvsep}{9.5pt}
  \newcommand{\lbblockdiameter}{8mm}
  \newcommand{\lbblocklabelscale}{0.8}
  \newcommand{\lbblocknode}[3]{%
    \node[#1circle] at (#2) {};
    \node[blocktext] at (#2) {\scalebox{\lbblocklabelscale}{#3}};
  }
  \begin{subfigure}[t]{0.47\textwidth}
    \centering
    \begin{tikzpicture}[
        x=1cm,
        y=1cm,
        axis/.style={-{Latex[length=1.5mm]}, draw=black!55,
          line width=0.45pt},
        coordlabel/.style={font=\scriptsize, below=\lbcoordlabelvsep},
        agent/.style={circle, draw=black, fill=black, inner sep=1.6pt},
        blockcircle/.style={circle, draw=black!70, fill=black!8,
          minimum size=\lbblockdiameter, inner sep=0pt},
        blocktext/.style={align=center, inner sep=0pt},
        deviator/.style={circle, draw=black, fill=black,
          inner sep=1.6pt},
        misreport/.style={-{Latex[length=2mm]}, draw=black,
          dashed, line width=0.8pt},
        every node/.style={font=\small}
      ]
      \node[anchor=east] at (-0.10,1.45) {$\prof$};
      \node[anchor=east] at (-0.10,0.10) {$\prof'$};

      \draw[axis] (0,1.45) -- (5.65,1.45);
      \draw[axis] (0,0.10) -- (5.65,0.10);

      \node[agent] at (0.55,1.45) {};
      \lbblocknode{block}{2.70,1.45}{$n-2$}
      \node[deviator] (lu-old) at (4.75,1.45) {};
      \node[coordlabel] at (0.55,1.45) {$-1$};
      \node[coordlabel] at (2.70,1.45) {$0$};
      \node[coordlabel] at (4.75,1.45) {$1$};

      \node[agent] at (0.55,0.10) {};
      \lbblocknode{block}{2.70,0.10}{$n-2$}
      \node[deviator] (lu-new) at (5.30,0.10) {};
      \node[coordlabel] at (0.55,0.10) {$-1$};
      \node[coordlabel] at (2.70,0.10) {$0$};
      \node[coordlabel] at (5.30,0.10) {$1+\alpha$};

      \draw[misreport] (lu-old) to[out=0,in=70] (lu-new);
    \end{tikzpicture}
    \caption{The lower bound construction in \citet{lu2009tighter}.}
    \label{fig:lb-comparison-lu}
  \end{subfigure}
  \hfill
  \begin{subfigure}[t]{0.49\textwidth}
    \centering
    \begin{tikzpicture}[
        x=1cm,
        y=1cm,
        axis/.style={-{Latex[length=1.5mm]}, draw=black,
          line width=0.45pt},
        coordlabel/.style={font=\scriptsize, below=\lbcoordlabelvsep},
        blockcircle/.style={circle, draw=black!70, fill=black!8,
          minimum size=\lbblockdiameter, inner sep=0pt},
        midblockcircle/.style={circle, draw=black, fill=black!10,
          minimum size=\lbblockdiameter, inner sep=0pt},
        blocktext/.style={align=center, inner sep=0pt},
        deviator/.style={circle, draw=black, fill=black,
          inner sep=1.6pt},
        misreport/.style={-{Latex[length=2mm]}, draw=black,
          dashed, line width=0.8pt},
        every node/.style={font=\small}
      ]
      \node[anchor=east] at (-0.10,1.45) {$\prof$};
      \node[anchor=east] at (-0.10,0.10) {$\prof'$};

      \draw[axis] (0,1.45) -- (5.80,1.45);
      \draw[axis] (0,0.10) -- (5.80,0.10);
      \draw[white, line width=2.2pt] (0.90,1.34) -- (1.08,1.56);
      \draw[black!55, line width=0.45pt] (0.88,1.34) -- (0.98,1.56);
      \draw[black!55, line width=0.45pt] (1.00,1.34) -- (1.10,1.56);
      \draw[white, line width=2.2pt] (0.90,-0.01) -- (1.08,0.21);
      \draw[black!55, line width=0.45pt] (0.88,-0.01) -- (0.98,0.21);
      \draw[black!55, line width=0.45pt] (1.00,-0.01) -- (1.10,0.21);

      \lbblocknode{block}{1.65,1.45}{$m$}
      \lbblocknode{midblock}{3.40,1.45}{$k$}
      \node[deviator] (our-old) at (3.22,1.45) {};
      \lbblocknode{block}{5.15,1.45}{$m$}
      \node[coordlabel] at (1.65,1.45) {$0$};
      \node[coordlabel] at (3.40,1.45) {$1/2$};
      \node[coordlabel] at (5.15,1.45) {$1$};

      \node[deviator] (our-new) at (0.35,0.10) {};
      \lbblocknode{block}{1.65,0.10}{$m$}
      \lbblocknode{midblock}{3.40,0.10}{$k-1$}
      \lbblocknode{block}{5.15,0.10}{$m$}
      \node[coordlabel] at (0.35,0.10) {$-m$};
      \node[coordlabel] at (1.65,0.10) {$0$};
      \node[coordlabel] at (3.40,0.10) {$1/2$};
      \node[coordlabel] at (5.15,0.10) {$1$};

      \draw[misreport]
        (our-old.south) .. controls (3.35,0.68) and (0.35,0.68)
        .. (our-new.north);
    \end{tikzpicture}
    \caption{The lower bound construction in this paper.}
    \label{fig:lb-comparison-ours}
  \end{subfigure}
  \caption{Comparison of the lower-bound constructions. The black dot denotes a
    single agent, while a large circle represents a co-located block of agents,
    and the number inside the circle is the block size. The dashed arrow marks
    the designated agent's misreport.}
  \label{fig:lower-bound-comparison}
\end{figure}
 
The key new ingredient is a block-amplification construction; see
\Cref{fig:lower-bound-comparison}.  The construction of
\citet{lu2009tighter} perturbs a single isolated agent, so strategyproofness
directly constrains only that agent's expected cost.  We instead place $m$
agents at each endpoint and a block of $k$ agents at the midpoint, and let one
midpoint agent misreport to a distant outlier.  The two profiles have different
optimal solutions.  Before the misreport, placing one facility at each endpoint
gives optimal cost $k/2$, paid by the midpoint block.  After the misreport,
placing one facility at the outlier and the other at the midpoint gives optimal
cost $m$, paid by the two endpoint blocks.
$k-1$ agents remaining at the midpoint incurs this same post-misreport cost.
The key is that strategyproofness requires the deviating agent's expected cost at
her true midpoint location becomes weakly larger after the misreport.  Every one of the
$k-1$ agents remaining at the midpoint incurs this same post-misreport cost.
Thus one incentive constraint is amplified by the size of the midpoint block.
Optimizing the relative block sizes gives $k/m\to2-\sqrt{2}$ and the lower
bound $(1+\sqrt{2})/2$.

\section{Preliminaries}
\label{sec:prelim}
Let $\metricspace$ be a metric space, where $\dom$ is the ambient space and $\dst\colon\dom\times\dom\to[0,\infty)$ is a metric satisfying three axioms:
\begin{align*}
  \dst(x,y)=\dst(y,x),\qquad
  \dst(x,y)=0\iff x=y,\qquad
  \dst(x,y)\le\dst(x,z)+\dst(z,y),
\end{align*}
for all $x,y,z\in\dom$.  
The $\dimension$-dimensional \emph{Euclidean space} $\reals^{\dimension}$ is the metric space with the ambient space $\dom=\reals^{\dimension}$ and the metric $\dst(x,y)=\norm{x-y}_{2}$. 
Next we introduce the Ptolemaic metric space, which is a rich class of metric spaces that contains every Euclidean space, every Hilbert space, every $\mathrm{CAT}(0)$ space and every metric tree \citep{foertsch2007ptolemy}.
\begin{definition}[Ptolemaic space]\label{def:ptolemaic}
A metric space $\metricspace$ is \emph{Ptolemaic} if every four points $x,y,z,w\in \dom$ satisfy the \emph{Ptolemy inequality}:
\begin{align}\label{eq:ptolemy}
  \dst(x,z)\cdot\dst(y,w)\le\dst(x,y)\cdot\dst(z,w)+\dst(x,w)\cdot\dst(y,z).
\end{align}
\end{definition}

Let $\agents=\set{1,\dots,n}$ be the set of agents.  The location of agent
$i\in\agents$ is $\loc{i}\in\dom$, and the \emph{location profile} is the vector
$\prof=\lranglefix{\loc{1},\dots,\loc{n}}\in\dom^{n}$. 
We write $\prof=\lranglefix{\loc{i},\prof_{-i}}$ when we single out agent
$i$, and $\lranglefix{\prof_{S},\prof_{-S}}$ for a subset $S\subseteq\agents$.

In the \emph{two-facility game}, a \emph{randomized mechanism} is a function
$\Mech\colon\dom^{n}\to\pairs{\dom}$ that maps every location profile to a distribution over unordered pairs of facility locations, where $\pairs{\dom}$ is the set of distributions over pairs of facility locations.
A \emph{facility pair} is an unordered pair
$\pair=(\ell_{1},\ell_{2})\in\dom^{2}$.
Given agent location profile $\prof$ and the mechanism $\Mech$, 
the cost of an agent $i$ is her expected distance to
the nearer of the two facilities:
\begin{align*}
  \cost\parent{\Mech(\prof),\loc{i}}
  \deq\expect[\pair\sim\Mech(\prof)]
    {\min\set{\dst(\ell_{1},\loc{i}),\dst(\ell_{2},\loc{i})}}.
\end{align*}
For simplicity we write $\Mech(\loc{i},\prof_{-i})$ for
$\Mech\parent{\lranglefix{\loc{i},\prof_{-i}}}$, and
$\Mech(\prof_{S},\prof_{-S})$ for
$\Mech\parent{\lranglefix{\prof_{S},\prof_{-S}}}$.

The \emph{social cost} of a mechanism $\Mech$ on a location profile $\prof$
is the total cost of all $n$ agents, denoted $\SC(\Mech,\prof)$ and defined as
\begin{align*}
  \SC(\Mech,\prof)\deq\sum_{i\in\agents}\cost\parent{\Mech(\prof),\loc{i}} .
\end{align*}
For a location profile $\prof$, let $\OPT(\prof)$ denote the minimum social
cost achievable by any pair of facilities on $\dom^2$.  We say that a mechanism
$\Mech$ has \emph{approximation ratio} $\ratio$ if for every profile $\prof\in \dom^{n}$,
\begin{align*}
  \SC(\Mech,\prof)\;\le\;\ratio\OPT(\prof),
\end{align*}
and there exists a profile $\prof$ such that the $\le$ holds with equality.

Now we define the notion of strategyproofness, which is the standard incentive property required for mechanisms.  A mechanism is strategyproof if no agent can reduce her expected cost by misreporting her location, regardless of the reports of other agents.

\begin{definition}[Strategyproofness]\label{def:sp}
A mechanism $\Mech$ is \emph{strategyproof} if for every agent $i\in \agents$, every location profile
$\prof=\lranglefix{\loc{i},\prof_{-i}}\in\dom^{n}$, every misreported location   
$\loc{i}\primed\in\dom$, it holds that
\begin{align*}
  \cost\parent{\Mech(\loc{i},\prof_{-i}),\loc{i}}
  \;\le\;
  \cost\parent{\Mech(\loc{i}\primed,\prof_{-i}),\loc{i}}.
\end{align*}
\end{definition}

We next recall the Proportional mechanism of \citet{lu2010asymptotically}, which is strategyproof on every metric space, and has approximation ratio exactly $4$.

\begin{definition}[Proportional mechanism]\label{def:prop}
The \emph{Proportional} mechanism $\Prop$ draws an anchor agent $i$ uniformly from $\agents$ and,
conditionally on $i$, draws the second agent $j$ from $\agents$ with probability
\begin{align*}
  \prob[\Prop]{j\condition i}=\frac{\dst(\loc{i},\loc{j})}{\sum_{k\in\agents}\dst(\loc{i},\loc{k})},
\end{align*}
opening facilities at $\loc{i}$ and $\loc{j}$. If all reported locations coincide, the Proportional mechanism opens
both facilities at the common reported location.
\end{definition}

Next we introduce the key technical ingredient of our paper, the
\emph{dispersion} of a location profile.  Relative to a chosen optimal facility
pair, it measures how far the agents lie from their assigned facilities
compared with the distance between those facilities. 
\begin{definition}[Dispersion]\label{def:dispersion}
Let $\prof\in\dom^{n}$ be a location profile whose optimum is attained.  
Fix an optimal facility pair
$(\ofac{1},\ofac{2})$ and let
$\sep=\dst(\ofac{1},\ofac{2})>0$ be the distance between the two facilities.
For each agent $i\in\agents$, define her optimal cost under the chosen optimal facility pair by $\ocost{i}\deq\min\set{\dst(\loc{i},\ofac{1}),\dst(\loc{i},\ofac{2})}$.
Then the \emph{dispersion} $\dispersion(\prof)$ of $\prof$ is
\begin{align*}
  \dispersion(\prof)\deq
  \dfrac
  {\sum_{i\in\agents}\ocost{i}\cdot{\min\set{\ocost{i}/\sep,1}}}
  {\sum_{i\in\agents}\ocost{i}}
\end{align*}
\end{definition}

The dispersion $\dispersion(\prof)$ is a
weighted average of the capped ratios
$\min\set{\ocost{i}/\sep,1}$, with weights equal to $\ocost{i}$.
Since $\min\set{\ocost{i}/\sep,1}\le1$, we have $0\le\dispersion(\prof)\le 1$.
When $\dispersion(\prof)$ approaches $0$, we have that $\ocost{i}\ll\sep$ for most agents, and they are therefore concentrated around their assigned
optimal facilities.  
Conversely, $\dispersion(\prof)$ close to $1$ means that most agents are far from their assigned optimal facilities, and the optimal solution is therefore dispersed. 
\section{The Global Pair Mechanism and Its Mixture with the Proportional Mechanism}
\label{sec:mech}
In this section, we introduce the Global Pair mechanism, the main mechanism of
the paper, and prove that it is strategyproof on every Ptolemaic metric space;
see \Cref{def:gp} and \Cref{sec:truthful-of-global-pair}.  On its own, however, the Global Pair
mechanism has approximation ratio $4$, exactly equal to that of the
Proportional mechanism. 
By expressing the profile-dependent approximation
bounds of both mechanisms in terms of the dispersion parameter
$\dispersion(\prof)$ (see \Cref{def:dispersion}), we show the most interesting part of this paper: that the approximation ratio of Global Pair mechanism is upper bounded by $3+\dispersion(\prof)$, whereas the approximation ratio of Proportional mechanism is bounded by $4-\dispersion(\prof)/2$.  The former increases linearly with $\dispersion(\prof)$, while the latter decreases linearly with $\dispersion(\prof)$; see \Cref{sec:interp}.  The two mechanisms are therefore complementary.  A randomized mixture of them achieves approximation ratio $11/3$; see \Cref{thm:mixture-exact}.
. The two
mechanisms are therefore complementary.  A randomized mixture of them breaks
the factor-$4$ barrier: running the Proportional mechanism with probability
$2/3$ and the Global Pair mechanism with probability $1/3$ yields a
strategyproof mechanism with approximation ratio exactly $11/3$; see
\Cref{thm:mixture-exact}.

\begin{definition}[Global Pair mechanism]\label{def:gp}
The \emph{Global Pair} mechanism $\GPair$ draws an unordered agent pair $\setfix{i,j}$
with probability
\begin{align*}
  \prob[\GPair]{\set{i,j}}
  =\frac{\dst(\loc{i},\loc{j})}{\sum_{k<l}\dst(\loc{k},\loc{l})},
\end{align*}
opening facilities at $\loc{i}$ and $\loc{j}$. If all reported locations coincide, the Global Pair mechanism opens
both facilities at the common reported location.
\end{definition}

\xhdr{Comparison of the two mechanisms.}
Define $\pw\deq\sum_{k<l}\dst(\loc{k},\loc{l})$ as the \emph{total distance} over all
unordered pairs of agents, the normalizing constant of \Cref{def:gp}, and
$\degr{i}\deq\sum_{j\in\agents}\dst(\loc{i},\loc{j})$ for the total distance from
agent $i$ to all agents, her \emph{distance degree}.
To compare the two mechanisms, realize the unordered draw of \Cref{def:gp} as an
ordered one: for all $i,j\in\agents$ let
$
\prob[\GPair]{i,j}\deq{\dst(\loc{i},\loc{j})}/{2\pw}
$
be the probability that the first selected agent is $i$ and the second is $j$.  
Then, the marginal probability that $i$ is the first selected agent is
\begin{align*}
  \prob[\GPair]{i}=\sum_{j\in\agents}\prob[\GPair]{i,j}
  =\frac{1}{2\pw}\sum_{j\in\agents}\dst(\loc{i},\loc{j})
  =\frac{\degr{i}}{2\pw},
\end{align*}
Given agent $i$ is the first selection, the conditional probability that $j$ is the second selection is 
\begin{align*}
  \prob[\GPair]{j\condition i}=\frac{\prob[\GPair]{i,j}}{\prob[\GPair]{i}}
  =\frac{\dst(\loc{i},\loc{j})/(2\pw)}{\degr{i}/(2\pw)}
  =\prob[\Prop]{j\condition i} ,
\end{align*}
the last equality being the second draw of \Cref{def:prop}.  
Therefore, the Global Pair
mechanism may equivalently be described as first drawing an anchor $i$ with
probability $\degr{i}/(2\pw)$ and then applying the same distance-proportional
second draw as the Proportional mechanism.  
The two differ only in the
anchor distribution:
that of the Proportional mechanism is uniform, that of the Global Pair mechanism
is distance-degree biased.  This is the sense in which they are
complementary, and it is what \Cref{thm:gp-interp,thm:prop-interp} quantify. Next we define the mixture of the two mechanisms, which is a randomization between them.

\begin{definition}[Mixture of $\Prop$ and $\GPair$]\label{def:mix}
For a constant $\lambda\in[0,1]$, the \emph{mixture} $\Mix{\lambda}$ is the randomized
mechanism that runs the Proportional mechanism $\Prop$ with probability $\lambda$ and
the Global Pair mechanism $\GPair$ with probability $1-\lambda$.
\end{definition}

\subsection{Truthfulness of the Global Pair mechanism}
\label{sec:truthful-of-global-pair}

We show that the Global Pair mechanism is strategyproof on a
large and natural subclass named Ptolemaic spaces, which includes every Euclidean space, Hilbert spaces and metric trees \citep{foertsch2007ptolemy}.

\begin{theorem}[Truthfulness of the Global Pair mechanism]\label{thm:gp-sp}
The Global Pair mechanism is strategyproof on every Ptolemaic metric space.
\end{theorem}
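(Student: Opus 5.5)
Fix an agent $i$ with true location $x\deq\loc{i}$, and let the others report $y_j\deq\loc{j}$ for $j\neq i$. Write $a_j\deq\dst(x,y_j)$. Let $\pw_0\deq\sum_{j<k,\;j,k\neq i}\dst(y_j,y_k)$ be the part of the total distance not involving $i$, and let $\Aterm\deq\sum_{j<k,\;j,k\neq i}\dst(y_j,y_k)\min\set{a_j,a_k}$. For a report $z$ put $\Ssum(z)\deq\sum_{j\neq i}\dst(z,y_j)$ and $m_j(z)\deq\min\set{\dst(x,z),a_j}$.

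I would first write agent $i$'s expected cost in closed form. A pair $\set{j,k}$ not containing $i$ is drawn with weight $\dst(y_j,y_k)$ and costs her $\min\set{a_j,a_k}$. A pair $\set{i,j}$ is drawn with weight $\dst(z,y_j)$ and costs her $m_j(z)$. Hence
\begin{align*}
  \cost\parent{\GPair(z,\prof_{-i}),x}=\frac{\Aterm+\sum_{j\neq i}\dst(z,y_j)\,m_j(z)}{\pw_0+\Ssum(z)}.
\end{align*}
At the truthful report $z=x$ every pair containing $i$ costs her $0$, so the truthful cost is $\Aterm/(\pw_0+\Ssum(x))$. The degenerate case in which all reports coincide is handled separately and is trivial.

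Cross-multiplying, strategyproofness is equivalent to
\begin{align*}
  \Aterm\cdot\parent{\Ssum(z)-\Ssum(x)}\;\le\;\parent{\pw_0+\Ssum(x)}\sum_{l\neq i}\dst(z,y_l)\,m_l(z).
\end{align*}
The left side only involves the increments $\dst(z,y_l)-a_l$, and each increment is at most $\min\set{\dst(x,z),\dst(z,y_l)}$ by the triangle inequality. I would therefore expand both sides into a double sum over pairs $\set{j,k}\not\ni i$ and agents $l\neq i$. It then suffices to prove, for each pair $\set{j,k}$ and each $l$,
\begin{align*}
  \dst(y_j,y_k)\min\set{a_j,a_k}\parent{\dst(z,y_l)-a_l}\;\le\;\dst(y_j,y_k)\,\dst(z,y_l)\,m_l(z)+(\text{a share of }\Ssum(x)\text{-terms}).
\end{align*}
Here the $\Ssum(x)$-terms on the right are distributed over the pairs to supply extra slack.

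When $l\notin\set{j,k}$ this inequality is easy. Either $\dst(x,z)\le a_l$, and then $m_l(z)=\dst(x,z)\ge\dst(z,y_l)-a_l$, while $\dst(z,y_l)\ge\min\set{a_j,a_k}$ can be arranged or absorbed. Or $a_l<\dst(x,z)$, and then $\min\set{a_j,a_k}\le a_j+a_k$ can be charged to the $\Ssum(x)$ contributions. I expect the main obstacle to be the diagonal terms $l\in\set{j,k}$, and more generally controlling $\dst(y_j,y_k)$ against distances through $x$ and $z$ without losing a constant factor. This is exactly where the Ptolemy inequality enters. I would apply \eqref{eq:ptolemy} to the quadruple $(x,z,y_j,y_k)$ in the form
\begin{align*}
  \dst(y_j,y_k)\,\dst(x,z)\le a_j\,\dst(z,y_k)+a_k\,\dst(z,y_j).
\end{align*}
Combined with $\min\set{a_j,a_k}\le a_j,a_k$, this lets the left-hand contributions indexed by $l=j$ and $l=k$ be dominated by the right-hand terms $\dst(z,y_j)\,m_j(z)$ and $\dst(z,y_k)\,m_k(z)$, weighted by $a_k$ and $a_j$ respectively, which appear inside $\Ssum(x)$.

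The first thing I would try is a case split on whether $\dst(x,z)$ is smaller than $\min\set{a_j,a_k}$, between $a_j$ and $a_k$, or larger than both. Within each case I would check that the pairwise inequality follows from Ptolemy plus the triangle inequality. The counterexample in \Cref{app:gp-counterexample} indicates that the Ptolemy step cannot be replaced by the triangle inequality alone.
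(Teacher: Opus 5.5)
\textbf{Gap: the plan locates the difficulty in the wrong terms.} Your closed-form cost, the cross-multiplied inequality, and the expansion into terms indexed by a pair $\set{j,k}$ and an agent $l$ all match the paper's setup. The paper in fact discards the $\Ssum(x)$-slack and proves $\Aterm(\Ssum(z)-\Ssum(x))\le\pw_0\Bterm$, with $\Bterm\deq\sum_l\eta_l$ and $\eta_l\deq\dst(z,y_l)m_l(z)$. Write $\delta_l\deq\dst(z,y_l)-a_l$ and $r\deq\dst(x,z)$. The $(\set{j,k},l)$ term is then $\dst(y_j,y_k)[\eta_l-\min\set{a_j,a_k}\delta_l]$.

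When $l\in\set{j,k}$ you have $\min\set{a_j,a_k}\le a_l$. Moreover $\eta_l-t\delta_l\ge0$ for every $t\le a_l$ by the triangle inequality alone: at $t=a_l$ it equals $a_l^2$ if $a_l\le r$, and it is at least $r^2$ otherwise, since $\dst(z,y_l)\le a_l+r$. So the diagonal terms need neither Ptolemy nor any $\Ssum(x)$-slack.

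The terms that can be negative are the off-diagonal ones with $a_l<\min\set{a_j,a_k}$, and your ``easy'' treatment of them fails. If $y_l=x$, then $\eta_l=0$, while the left side is $\dst(y_j,y_k)\min\set{a_j,a_k}\,r>0$. Even when $r\le a_l$, taking $\dst(z,y_l)=a_l+r$ makes the term equal to $\dst(y_j,y_k)\,r\,(a_l+r-\min\set{a_j,a_k})$, which is negative whenever $\min\set{a_j,a_k}>a_l+r$.

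\textbf{The $\Ssum(x)$-slack cannot pay for these terms.} The diagonal terms are nonnegative in every metric space. So handling the off-diagonal terms by the triangle inequality plus a share of $\Ssum(x)\Bterm$ would prove strategyproofness on every metric space. That contradicts the $K_{2,3}$ example, where $(\pw_0+\Ssum(x))\Bterm=117<120=\Aterm(\Ssum(z)-\Ssum(x))$. The shortfall is visible even on the line. Take $N$ other agents at $x=0$, one at each of $\pm1$, and report $z=10$. The negative off-diagonal terms total $-20N$, while $\Ssum(x)\Bterm=40$.

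\textbf{The missing idea: group by triples.} Group the off-diagonal terms by unordered triple $\set{l,j,k}$, labeled so that $a_l\le a_j\le a_k$. Only the term of $l$ can be negative. It must be paid for by the other two terms of the same triple, $\dst(y_l,y_k)[\eta_j-a_l\delta_j]+\dst(y_l,y_j)[\eta_k-a_l\delta_k]$. In the line example these equal $11+9$, exactly cancelling the $-20$, so there is no room for a lossy argument. The paper bounds them in three regimes:
\begin{itemize}
\item When $r\ge a_j$, the required inequality follows from Ptolemy applied to the report and the three others, $\dst(z,y_l)\dst(y_j,y_k)\le\dst(z,y_j)\dst(y_l,y_k)+\dst(z,y_k)\dst(y_l,y_j)$. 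This coincides with your quadruple $(x,z,y_j,y_k)$ only when $y_l=x$.
\item When $r\le a_l$, the triangle inequality $\dst(y_j,y_k)\le\dst(y_l,y_j)+\dst(y_l,y_k)$ suffices.
\item The intermediate regime $a_l<r<a_j$ requires a separate, more delicate triangle-inequality estimate.
\end{itemize}
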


We give a sketch of the proof of \Cref{thm:gp-sp} 
, since this explains
where the Ptolemy inequality \cref{eq:ptolemy} is needed.  Fix an agent with
true location $x$ who reports $x\primed$, let $z_{1},\dots,z_{m}$ be the reports
of the other agents, and abbreviate
\begin{align*}
  \devd=\dst(x,x\primed),\qquad
  \tdist{i}=\dst(x,z_{i}),\qquad
  \rdist{i}=\dst(x\primed,z_{i}),\qquad
  \odist{i}{j}=\dst(z_{i},z_{j}).
\end{align*}
Let $\Ssum=\sum_{i}\tdist{i}$ and $\Tsum=\sum_{i}\rdist{i}$ be the total distances from her true
and from her reported location to the other reports, and let
$\pw=\sum_{i<j}\odist{i}{j}$ be the total distance over pairs of other reports.  
By some computation, we get the agent's expected costs when truthful and when deviating are therefore $\Aterm/(\pw+\Ssum)$ and $(\Aterm+\Bterm)/(\pw+\Tsum)$, respectively. So
truthfulness is equivalent to $\Aterm(\Tsum-\Ssum)\le \Bterm(\pw+\Ssum)$, which is implied by the stronger
inequality $\Aterm(\Tsum-\Ssum)\le \Bterm\pw$.  Expanding both sides we get
\begin{align}
  \label{eq:truthful_sketch}
  \Bterm\pw-\Aterm(\Tsum-\Ssum)
  =\sum_{j<k}\underbrace{\odist{j}{k}\sum_{i}
     \underbrace{\Bigl[\rdist{i}\cdot\min\set{\devd,\tdist{i}}
       -\min\set{\tdist{j},\tdist{k}}\cdot(\rdist{i}-\tdist{i})\Bigr]
     }_{inner\ term}}_{outer\ term}.
\end{align}
For $i\in\set{j,k}$, the inner terms are nonnegative by the triangle inequality and elementary algebra.
For $i\notin\set{j,k}$, consider outer terms.
Each such term is
determined by an ordered triple $\lranglefix{j,k,i}$.
For given distinct $j,k,i$, there are three outer terms, corresponding to the three possible ordered triples $\lranglefix{j,k,i}$, $\lranglefix{i,j,k}$ and $\lranglefix{k,i,j}$. 
Label the three indices so that $\tdist{i}\le\tdist{j}\le\tdist{k}$.  
So the sum of the three outer terms becomes
\begin{align*}
  \odist{j}{k}[\rdist{i}\min\set{\devd,\tdist{i}}
    -\tdist{j}(\rdist{i}-\tdist{i})]
  +\odist{i}{k}[\rdist{j}\min\set{\devd,\tdist{j}}
    -\tdist{i}(\rdist{j}-\tdist{j})]
  +\odist{i}{j}[\rdist{k}\min\set{\devd,\tdist{k}}
    -\tdist{i}(\rdist{k}-\tdist{k})].
\end{align*}
By triangle inequality and elementary algebra, 
the nonnegativity of above sum is implied by
$\rdist{i}\odist{j}{k}\le\rdist{j}\odist{i}{k}+\rdist{k}\odist{i}{j}$, which exactly is the Ptolemy inequality in \cref{eq:ptolemy} for the four points $x\primed,z_{i},z_{j},z_{k}$.  
Thus, in every Ptolemaic metric space, the Global Pair mechanism is strategyproof.
The formal proof is in \Cref{app:gp-sp}. 

We note that the Ptolemaic assumption in \Cref{thm:gp-sp} cannot be removed in general:
the Global Pair mechanism is not strategyproof on certain metric
spaces.
\Cref{app:gp-counterexample} exhibits a profile of $14$ agents on the unit-edge
shortest-path metric of $K_{2,3}$, which is not Ptolemaic, at which a single
misreport strictly lowers the deviator's expected cost.

Next, by the truthfulness of the Proportional mechanism in every metric space, we immediately get the following corollary.
\begin{corollary}\label{cor:mix-sp}
For every constant $\lambda\in[0,1]$, the mechanism $\Mix{\lambda}$ is strategyproof on every Ptolemaic metric space.
\end{corollary}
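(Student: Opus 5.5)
The plan is to obtain \Cref{cor:mix-sp} directly from \Cref{thm:gp-sp} together with the known strategyproofness of $\Prop$ on every metric space \citep{lu2010asymptotically}. The one structural fact needed is that an agent's expected cost is linear in the mechanism's output distribution. Fix $\lambda\in[0,1]$, a Ptolemaic space $\metricspace$, an agent $i$, a profile $\prof=\lranglefix{\loc{i},\prof_{-i}}$ and a misreport $\loc{i}\primed$.

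First, since $\Mix{\lambda}(\prof)$ is the mixture $\lambda\Prop(\prof)+(1-\lambda)\GPair(\prof)$ of distributions over facility pairs, the expectation defining $\cost$ splits as
\begin{align*}
  \cost\parent{\Mix{\lambda}(\prof),\loc{i}}
  =\lambda\,\cost\parent{\Prop(\prof),\loc{i}}
   +(1-\lambda)\,\cost\parent{\GPair(\prof),\loc{i}}.
\end{align*}
The same identity holds for the deviated profile $\lranglefix{\loc{i}\primed,\prof_{-i}}$, still evaluated at the true location $\loc{i}$. The mixing coin must be independent of the reports, since $\lambda$ is a fixed constant. This is what makes the identity valid with the same weight $\lambda$ on both profiles.

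Next, I apply strategyproofness of $\Prop$ (any metric space, hence in particular Ptolemaic ones) and of $\GPair$ (\Cref{thm:gp-sp}). This gives
\begin{align*}
  \cost\parent{\Prop(\loc{i},\prof_{-i}),\loc{i}}&\le\cost\parent{\Prop(\loc{i}\primed,\prof_{-i}),\loc{i}},\\
  \cost\parent{\GPair(\loc{i},\prof_{-i}),\loc{i}}&\le\cost\parent{\GPair(\loc{i}\primed,\prof_{-i}),\loc{i}}.
\end{align*}
I multiply these by the nonnegative weights $\lambda$ and $1-\lambda$ and add them. By the linearity identity above, the sum yields exactly the inequality of \Cref{def:sp} for $\Mix{\lambda}$.

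There is no real obstacle. The only points needing care are the degenerate profiles where all reports coincide, which both mechanisms handle by convention and which the constituent theorems already cover. The endpoint values $\lambda\in\set{0,1}$ reduce to the individual mechanisms themselves.
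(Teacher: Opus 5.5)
Your proposal is correct and matches the paper's argument: the paper obtains the corollary immediately from \Cref{thm:gp-sp} together with the strategyproofness of $\Prop$ on every metric space. The paper leaves implicit the step that an agent's expected cost is linear in the report-independent mixing weight $\lambda$; your write-up states it and uses it explicitly.
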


\subsection{Approximation analysis by dispersion and breaking the barrier by mixture}\label{sec:interp}
We now turn from incentives to approximation, and analyze the social costs of
the Global Pair mechanism $\GPair$ and the Proportional mechanism $\Prop$ on a
fixed location profile $\prof$.  We bound both in terms of the dispersion
parameter $\dispersion(\prof)$ of \Cref{def:dispersion}.  The two bounds move in opposite
directions: the social cost of $\GPair$ is at most $(3+\dispersion(\prof))\cdot\OPT(\prof)$, which
degrades as the optimal solution spreads out, while that of $\Prop$ is at most
$(4-\dispersion(\prof)/2)\cdot\OPT(\prof)$, which degrades as it concentrates
(\Cref{thm:gp-interp,thm:prop-interp}).  Mixing the two mechanisms therefore
cancels the dependence on $\dispersion$, and at $\lambda=2/3$ it yields approximation
ratio exactly $11/3$ (\Cref{thm:mixture-exact}).

We first present parametric versions of the approximation bounds for the two mechanisms, which are the main technical results of this section. 
\begin{theorem}\label{thm:gp-interp} 
For a location profile $\prof$ in a metric space $\metricspace$, suppose that
the optimal social cost $\OPT(\prof)$ is attained, and let $\dispersion(\prof)$ be the
dispersion parameter defined in \Cref{def:dispersion}.
Under truthful reporting, the expected social
cost of the Global Pair mechanism $\SC(\GPair,\prof)$ satisfies
\begin{align*}
  \SC(\GPair,\prof)\le(3+\dispersion(\prof))\cdot\OPT(\prof).
\end{align*}
Moreover, the approximation ratio of the Global Pair mechanism is exactly $4$.
\end{theorem}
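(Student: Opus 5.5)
The plan is to write the social cost as a double sum over selected pairs and compare it with $W\cdot\OPT(\prof)$. Fix the optimal pair $(\ofac{1},\ofac{2})$ from \Cref{def:dispersion}. Let $\clus{1},\clus{2}$ be the agents assigned to $\ofac{1},\ofac{2}$, with sizes $\nclus{1},\nclus{2}$, and write $\Rsum=\sum_i\ocost{i}=\OPT(\prof)$. By \Cref{def:gp},
\begin{align*}
  \SC(\GPair,\prof)=\frac{1}{\pw}\sum_{i<j}\dst(\loc{i},\loc{j})\,\cost_{ij},\qquad
  \cost_{ij}\deq\sum_{k\in\agents}\min\set{\dst(\loc{k},\loc{i}),\dst(\loc{k},\loc{j})}.
\end{align*}
It therefore suffices to prove
\begin{align*}
  \sum_{i<j}\dst(\loc{i},\loc{j})\,\cost_{ij}\le\parent{3\Rsum+\Qsum}\pw,\qquad \Qsum\deq\sum_i\ocost{i}\min\set{\ocost{i}/\sep,1},
\end{align*}
since $\Qsum=\dispersion(\prof)\Rsum$.

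I will split the pairs into \emph{cross} pairs ($i\in\clus{1}$, $j\in\clus{2}$) and \emph{same-cluster} pairs.

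For a cross pair, serve each $k\in\clus{1}$ by $\loc{i}$ and each $k\in\clus{2}$ by $\loc{j}$. The triangle inequality then gives
\begin{align*}
  \cost_{ij}\le\Rsum+\nclus{1}\ocost{i}+\nclus{2}\ocost{j}.
\end{align*}
For a same-cluster pair $i,j\in\clus{1}$, the agents of $\clus{1}$ pay at most $\ocost{k}+\min\set{\ocost{i},\ocost{j}}$. The agents of $\clus{2}$ pay at most $\ocost{k}+\sep+\min\set{\ocost{i},\ocost{j}}$. Here I will also keep the trivial alternative bound $\ocost{k}+\dst(\ofac{2},\loc{i})$. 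The weight satisfies $\dst(\loc{i},\loc{j})\le\ocost{i}+\ocost{j}$.

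On the other side I need a lower bound on $\pw$ that absorbs these terms. I will use the following facts:
\begin{itemize}
  \item for cross pairs, $\dst(\loc{i},\loc{j})\ge\max\set{\sep-\ocost{i}-\ocost{j},\,|\ocost{i}-\ocost{j}|}$;
  \item summing over $j$, one gets $\degr{i}\ge\sum_{k}\ocost{k}$ up to the $\ocost{i}$ contributions (the same averaging device as in the analysis of the Proportional mechanism of \citet{lu2010asymptotically});
  \item optimality of $(\ofac{1},\ofac{2})$ rules out cheaper alternatives, e.g.\ $\nclus{1}\sep\ge\sum_{k\in\clus{2}}$ savings.
\end{itemize}
Charging term by term, the "$\Rsum$" parts of the $\cost_{ij}$ bounds produce $\Rsum\pw$. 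The "$\nclus{1}\ocost{i}$" parts, after summing over $j$ and comparing with $\degr{i}$, should produce at most $2\Rsum\pw$.

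What remains are the quadratic terms $\ocost{i}^2$ coming from $\dst(\loc{i},\loc{j})\le\ocost{i}+\ocost{j}$ multiplied by $\ocost{i}$. For these I will split according to whether $\ocost{i}\le\sep$ or $\ocost{i}>\sep$.
\begin{itemize}
  \item When $\ocost{i}\le\sep$, I will bound $\ocost{i}^2=\ocost{i}\cdot(\ocost{i}/\sep)\cdot\sep$ and use that the cross pairs contribute roughly $\sep$ each to $\pw$.
  \item When $\ocost{i}>\sep$, I will use $\ocost{i}\cdot 1$ against $\degr{i}\ge$ const$\cdot\ocost{i}$.
\end{itemize}
Together these give exactly the $\Qsum\pw$ term. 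I expect this matching of the quadratic leftovers to $\ocost{i}\min\set{\ocost{i}/\sep,1}$ to be the main obstacle, because the constant must come out to exactly $1$ and not $2$. My first attempt is to symmetrize each pair's contribution over the ordered pair $(i,j)$ and $(j,i)$ and apply $2ab\le a^2+b^2$ before capping.

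Since $\dispersion(\prof)\le1$, the upper bound of $4$ on the ratio follows immediately. For tightness I will exhibit a family on $\reals$ with $\dispersion\to1$ whose ratio tends to $4$. The candidate is a heavy co-located block of $n$ agents at $0$ together with a large number of agents spread symmetrically at distance comparable to and exceeding $\sep$ from the second facility, so that almost every $\ocost{i}\ge\sep$. In such an instance $\GPair$ draws, with probability tending to one, a pair inside the spread cluster, leaving the heavy block far away. I will tune the block sizes and compute the limit explicitly, adjusting the construction until the ratio tends to $4$.
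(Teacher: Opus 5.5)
Your upper-bound plan fails at the cross-pair estimate $\cost_{ij}\le\Rsum+\nclus{1}\ocost{i}+\nclus{2}\ocost{j}$, and at the claim that the $\nclus{1}\ocost{i}$ parts contribute at most $2\Rsum\pw$. Serving all of $\clus{1}$ from $\loc{i}$ charges each of its agents an extra $\ocost{i}$ even when $\ocost{i}\gg\sep$. Yet every such agent is within $\ocost{k}+\sep+\ocost{j}$ of the other opened facility. As a result, the sum of your bounds genuinely exceeds $(3\Rsum+\Qsum)\pw$, so no lower bound on $\pw$ can rescue the plan.

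Concretely, on $\reals$ place one agent $a$ at $-K$, $N-1$ agents at $0$ and $N$ agents at $1$, with $4\le K<N-1$. Then $\set{0,1}$ is optimal, $\sep=1$, $\Rsum=K$, $\ocost{a}=K$ and $\dispersion(\prof)=1$. The target is therefore $4K\pw$, with $\pw=N^{2}+2NK-K$. The $N$ cross pairs $\set{a,j}$ with $\loc{j}=1$ have weight $K+1$ and your bound $K(N+1)$. They alone contribute $K(K+1)N(N+1)$, which exceeds $4K\pw$ for large $N$, although the true cost of each such pair is only $N-1$.

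So the quadratic leftovers $\nclus{1}\ocost{i}^{2}$ are artifacts of the estimate and cannot be matched to $\ocost{i}\min\set{\ocost{i}/\sep,1}$. Capping the extra cost by something like $\min\set{\ocost{i},\sep+\ocost{j}}$ removes this example. Even then, your sketch offers no mechanism that produces the exact constants $3$ and $1$.

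The missing idea is the paper's triple decomposition, which normalizes by $\pw$ exactly. The two selected agents pay nothing, so every term of $\pw\,\SC(\GPair,\prof)=\sum_{a<b}\dst(\loc{a},\loc{b})\cost_{ab}$ belongs to an unordered triple. A triple with side lengths $\alpha\le\beta\le\gamma$ contributes exactly $\Phi_{ijk}=2\alpha\beta+\alpha\gamma$.

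After scaling to $\sep=1$, the metric triple lemma gives $\Phi_{ijk}\le\dst(\loc{j},\loc{k})H(\ocost{i})+\dst(\loc{i},\loc{k})H(\ocost{j})+\dst(\loc{i},\loc{j})H(\ocost{k})$, where $H(c)=3c+c\min\set{c,1}$. It is proved by cases on whether all three agents share an optimal cluster or exactly two do, using only triangle inequalities (including $\dst(\loc{i},\loc{k})\ge1-\ocost{i}-\ocost{k}$ across clusters). Summing over triples multiplies each $H(\ocost{i})$ by the weight $\pw-\degr{i}\le\pw$ of pairs avoiding $i$. This yields $(3\Rsum+\Qsum)\pw$ at once, so none of the optimality or degree lower bounds on $\pw$ you planned are needed.

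Your tightness plan also targets the wrong regime. Suppose the block sits at $\ofac{1}$ and pairs inside the spread cluster carry almost all of $\pw$. Then every triple either lies inside one cluster, where the paper's same-cluster case gives the bound with $H(c)$ replaced by $3c$, or contains a block agent. In the latter case it is charged only to block--spread edges, whose total weight is $o(\pw)$. Hence the ratio is at most $3+o(1)$ in that regime.

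The paper's tight family is $\lranglefix{0^{m},1^{m},2}$, with $\OPT=1$, $\pw=m^{2}+3m$ and weighted cost $4m^{2}$, giving ratio $4m/(m+3)\to4$. There the optimal pair $\set{0,1}$ is drawn with probability tending to one. The rare pairs containing the outlier at $2$ (total weight $3m$) each cost $m$, and they supply the extra $3$.
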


\begin{proof}
If $\OPT(\prof)=0$, the profile has at most two occupied locations and the
Global Pair mechanism incurs zero social cost.  Hence assume
$\OPT(\prof)>0$.
Fix an optimal facility pair $(\ofac{1},\ofac{2})$.  For each agent $i$, let
$\ocost{i}$ be her distance to the nearer optimal facility.  Thus
$\Rsum=\OPT(\prof)=\sum_i\ocost{i}$.  Recall that
$\sep=\dst(\ofac{1},\ofac{2})$ is the distance between the optimal facilities,
and let
$
  \Qsum
  =\sum_{i\in\agents}\ocost{i}\min\set{\ocost{i},\sep},
$
and 
$
  \dispersion
  =\frac{\Qsum}{\sep\Rsum}.
$
Define  
$
  G\deq\SC(\GPair,\prof)
$ and 
$
  \pw\deq\sum_{a<b}\dst(\loc{a},\loc{b}).
$
Here $G$ is the expected social cost of the Global Pair mechanism and $\pw$ is
the total pairwise distance, which normalizes its pair-selection probabilities.

For a pair $\set{a,b}$, let
\begin{align*}
  C_{ab}
  \deq\sum_{k\in\agents}
    \min\set{\dst(\loc{k},\loc{a}),\dst(\loc{k},\loc{b})}
\end{align*}
be the social cost when facilities are opened at $\loc{a}$ and $\loc{b}$.
Since the Global Pair mechanism selects $\set{a,b}$ with probability
$\dst(\loc{a},\loc{b})/\pw$, its expected social cost is
\begin{align*}
  G
  =\sum_{a<b}\frac{\dst(\loc{a},\loc{b})}{\pw}\,C_{ab}
  =\frac{1}{\pw}
    \underbrace{\sum_{a<b}\dst(\loc{a},\loc{b})C_{ab}}
      _{\text{unnormalized numerator}}.
\end{align*}
Thus $\pw G$ is the numerator of the expectation: it is the sum, over all
possible selected pairs, of the pair's weight times the social cost generated
by that pair.

The two agents in the selected pair incur zero cost.  Therefore every term in
$\pw G$ can be grouped with an unordered triple $\set{i,j,k}$.  Write
$d_{ab}=\dst(\loc{a},\loc{b})$ and relabel the three distances in the triple as
$\alpha\le\beta\le\gamma$.  The triple's contribution is
\begin{align}\label{eq:triple-weight}
  \Phi_{ijk}
  \deq d_{ij}\min\set{d_{ik},d_{jk}}
       +d_{ik}\min\set{d_{ij},d_{jk}}
       +d_{jk}\min\set{d_{ij},d_{ik}}
  =2\alpha\beta+\alpha\gamma.
\end{align}
Indeed, each of the three terms corresponds to selecting one edge of the
triple; that edge supplies the selection weight, and the remaining agent pays
her distance to its nearer endpoint.

The desired inequality is homogeneous in all distances.  We may therefore
scale the metric by $1/\sep$ and assume $\sep=1$.  Define
$H(c)=3c+c\min\set{c,1}$.  The auxiliary inequality proved in
\Cref{app:triple} gives
\begin{align*}
  \Phi_{ijk}\;\le\;\dst(\loc{j},\loc{k})\cdot H(\ocost{i})
   +\dst(\loc{i},\loc{k})\cdot H(\ocost{j})
   +\dst(\loc{i},\loc{j})\cdot H(\ocost{k}).
\end{align*}
Summing this inequality over all unordered triples gives
\begin{align*}
  \pw G
  =\sum_{i<j<k}\Phi_{ijk} 
  \le\sum_{i\in\agents}H(\ocost{i})
    \sum_{\substack{j<k\\ j,k\ne i}}\dst(\loc{j},\loc{k})
  =\sum_{i\in\agents}H(\ocost{i})\bigl(\pw-\degr{i}\bigr)
  \le\pw\sum_{i\in\agents}H(\ocost{i}).
\end{align*}
Here $\degr{i}=\sum_j\dst(\loc{i},\loc{j})$ is the distance degree of agent
$i$, namely the total weight of all pairs incident to $i$.  Consequently,
$\pw-\degr{i}$ is exactly the total weight of pairs among the other agents,
which explains the last inequality.

Dividing by $\pw$ and using $\sep=1$ now yields
\begin{align*}
  G
  \le\sum_{i\in\agents}H(\ocost{i})
  =3\sum_{i\in\agents}\ocost{i}
    +\sum_{i\in\agents}\ocost{i}\min\set{\ocost{i},1} 
  =3\Rsum+\Qsum
   =(3+\dispersion)\Rsum.
\end{align*}
Finally, rescaling the metric multiplies both $G$ and $\Rsum$ by the same
factor and leaves $\dispersion$ unchanged.  Hence the inequality also holds in the
original scale.
\end{proof}

\begin{theorem}\label{thm:prop-interp}
For a location profile $\prof$ in a metric space $\metricspace$, suppose that
the optimal social cost $\OPT(\prof)$ is attained, and let $\dispersion(\prof)$ be the
dispersion parameter defined in \Cref{def:dispersion}.  Under truthful reporting,
the expected social cost of the Proportional mechanism satisfies
\begin{align*}
  \SC(\Prop,\prof)
  \le\parentfix{4-{\dispersion(\prof)}/{2}}\cdot\OPT(\prof).
\end{align*}
\end{theorem}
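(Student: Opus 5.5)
We scale the metric so that $\sep=1$, exactly as in the proof of \Cref{thm:gp-interp}, and reduce the claim to a \emph{per-anchor} inequality that is then averaged over the uniform anchor. The case $\OPT(\prof)=0$ is trivial, so we assume $\OPT(\prof)>0$. We fix an optimal pair $(\ofac{1},\ofac{2})$, write $\Rsum=\sum_k\ocost{k}$ and $d_{ab}=\dst(\loc{a},\loc{b})$, and keep $C_{ab}$ as before. By \Cref{def:prop},
\begin{align*}
  \SC(\Prop,\prof)=\frac1n\sum_{i\in\agents}P_i,\qquad
  P_i\deq\frac{1}{\degr{i}}\sum_{j\in\agents}d_{ij}\,C_{ij}
  =\frac{1}{\degr{i}}\sum_{j,k\in\agents}d_{ij}\min\set{d_{ik},d_{jk}}.
\end{align*}
The target per-anchor bound is
\begin{align*}
  P_i\;\le\;2n\,\ocost{i}+\sum_{k\in\agents}\ocost{k}\Bigl(2-\tfrac12\min\set{\ocost{k},1}\Bigr).
\end{align*}
Averaging over $i$ then gives $\SC(\Prop,\prof)\le 2\Rsum+2\Rsum-\tfrac12\Qsum=(4-\dispersion/2)\Rsum$, and undoing the scaling finishes the proof. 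The split mirrors the classical factor-$4$ analysis: the term $2n\ocost{i}$ is the price of a badly placed anchor, and the sum is the price of the second draw.

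To prove the per-anchor bound, we fix $i$, say assigned to $\ofac{1}$, and partition the agents into the two optimal clusters $\clus{1}\ni i$ and $\clus{2}$. We then bound, for each fixed $k$, the $\degr{i}$-normalized average $\frac{1}{\degr{i}}\sum_j d_{ij}\min\set{d_{ik},d_{jk}}$, which is a convex combination over $j$ with weights $d_{ij}/\degr{i}$.
\begin{itemize}[leftmargin=*]
\item If $k\in\clus{1}$, we use $\min\set{d_{ik},d_{jk}}\le d_{ik}\le\ocost{i}+\ocost{k}$.
\item If $k\in\clus{2}$, we use $\min\set{d_{ik},d_{jk}}\le d_{jk}$ together with the fact that the weight $d_{ij}$ is large exactly when $j$ lies far from $\ofac{1}$, i.e.\ typically in $\clus{2}$ near $k$. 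The classical argument splits this into $d_{jk}\le \ocost{j}+\ocost{k}$ for $j\in\clus{2}$, and $d_{jk}\le d_{ik}+d_{ij}$ otherwise. It then uses $\sum_j d_{ij}\ocost{j}\le\degr{i}\cdot(\text{stuff})$ and $\degr{i}\ge\sum_{j\in\clus{2}}(1-\ocost{i}-\ocost{j})$ to trade the factor $1/\degr{i}$ against $n\ocost{i}$.
\end{itemize}
This case analysis should reproduce the known bound $P_i\le 2n\ocost{i}+2\Rsum$, which averages to the factor $4$.

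The main obstacle is extracting the saving $-\tfrac12\ocost{k}\min\set{\ocost{k},1}$ for each $k$. Intuitively, the first-order estimate $\min\set{d_{ik},d_{jk}}\le\ocost{k}+(\cdots)$ is lossy when $\ocost{k}$ is comparable to $\sep=1$. In that case $k$ is far from both facilities, and since the triangle inequality gives $d_{ik}\ge \ocost{k}-\ocost{i}$, agent $k$ itself carries selection weight of order $\ocost{k}$ as a second draw. When $k$ is selected, $k$ pays $0$ rather than $\ocost{k}$.

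The first thing to try is to keep the diagonal term $j=k$ explicitly instead of bounding it. The $j=k$ summand contributes $0$, whereas the crude bound charged it roughly $d_{ik}\,\ocost{k}/\degr{i}$. Lower-bounding $d_{ik}\ge\min\set{\ocost{k},1}-\ocost{i}$ (using $\sep=1$ for cross-cluster agents), and again paying the $\ocost{i}$ correction into the $2n\ocost{i}$ slack, should yield a per-$k$ saving of order $\ocost{k}\min\set{\ocost{k},1}/\degr{i}$.

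The remaining difficulty is to turn this into a saving normalized by $1$ rather than by $\degr{i}$. For this I would symmetrize over pairs $(j,k)$, so that each agent's saving is charged against its own contribution to $\degr{i}$. If the constant $\tfrac12$ does not come out directly, I would fall back to a triple-based inequality analogous to \Cref{app:triple}: bound each term $d_{ij}\min\set{d_{ik},d_{jk}}$, with the factor $1/\degr{i}$ kept outside, by a combination of $\ocost{i},\ocost{j},\ocost{k}$ and $\min\set{\cdot,1}$, and verify it by a small case analysis on the clusters of $i,j,k$.
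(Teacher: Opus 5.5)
There is a genuine gap, and it sits in your first step. Your per-anchor target $P_i\le 2n\ocost{i}+\sum_k\ocost{k}\bigl(2-\tfrac12\min\set{\ocost{k},1}\bigr)$ is false. So is the ``classical'' bound $P_i\le 2n\ocost{i}+2\Rsum$ that you expect your case analysis to reproduce.

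Take $\prof=\lranglefix{0^{A},1^{B},2}$ on $\reals$ with $A\ge2$. Its unique optimal pair is $\set{0,1}$, so $\sep=1$ and $\Rsum=1$. Only the agent at $2$ has positive optimal cost, equal to $1$, so $\Qsum=1$. For an anchor $i$ at $0$ we have $\ocost{i}=0$ and $\degr{i}=B+2$. The second draw lands at $1$ with probability $B/(B+2)$, giving social cost $1$. It lands at $2$ with probability $2/(B+2)$, giving social cost $B$. Hence $P_i=3B/(B+2)$. Your target right-hand side is $\tfrac32$, which is already violated at $B=3$, where $P_i=9/5$. As $B\to\infty$ we get $P_i\to 3\Rsum$, which exceeds even $2\Rsum$ (for $B\ge5$).

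The example shows what really happens. An anchor sitting at its optimal facility must be charged up to $3\Rsum_2$ for the other cluster. Indeed, the paper's fixed-anchor bound for $i\in\clus{1}$ is $2U_i+3\Rsum_2-U_i(U_i+3\Rsum_2)/(U_i+V_i)$, where $U_i,V_i$ are the total distances from $\loc{i}$ to $\clus{1}$ and $\clus{2}$. Moreover, this conditional cost grows, rather than shrinks, with the dispersion of $\clus{2}$. The distance-proportional second draw favors far outliers of $\clus{2}$, and selecting one makes the whole bulk of $\clus{2}$ pay roughly the outlier's distance.

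So there is no anchor-independent per-anchor saving of order $\sum_k\ocost{k}\min\set{\ocost{k},1}$ to extract. Neither keeping the diagonal term $j=k$ nor symmetrizing over $(j,k)$ can produce one.

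The saving appears only after averaging over anchors. The paper combines two sources of slack:
\begin{itemize}
\item the retained term $\Lambda_1=\sum_{i\in\clus{1}}U_i(U_i+3\Rsum_2)/(U_i+V_i)$, which is large for outlying anchors of $\clus{1}$;
\item the slack $\nclus{2}\Rsum_1$, which arises because anchors in $\clus{2}$ are charged only $3\Rsum_1$, not $4\Rsum_1$, for $\clus{1}$.
\end{itemize}
It then proves $\Lambda_1+\nclus{2}\Rsum_1\ge n\Qsum_1/(2\sep)$. That proof uses the $1$-median property of $\ofac{1}$ and the cancellation $\sum_{i\in\clus{1}}(\Rsum_1/\nclus{1}-\ocost{i})=0$, an estimate that is intrinsically not pointwise in the anchor. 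Any repair of your argument has to move to this kind of cross-anchor accounting.
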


\begin{proof}[Proof sketch]
The proof of \Cref{thm:prop-interp} refines the factor-$4$ analysis of
\citet{lu2010asymptotically}.  Like their proof, it conditions on the uniformly
selected anchor and partitions the agents according to an optimal two-facility
solution.  The original proof separately bounds the conditional costs of the
agents in the anchor's optimal cluster and those in the other optimal cluster,
which yields the factor $4$.  We instead keep these costs together and, after
averaging over all possible anchors, write the bound as $4\cdot\OPT(\prof)$ minus an
explicit nonnegative remainder.  We show that this remainder is at least
$\dispersion(\prof)\cdot\OPT(\prof)/2$, which yields the profile-dependent factor
$4-\dispersion(\prof)/2$.

The full proof is in \Cref{app:prop-interp}; we sketch the main idea here, which
produces the improvement over $4$.  Fix an optimal facility pair and its
associated clusters $\clus{1}$ and $\clus{2}$.  Recall that
$\nclus{1}=\setsize{\clus{1}}$, $\nclus{2}=\setsize{\clus{2}}$, and $n=\nclus{1}+\nclus{2}$.  For
$r\in\set{1,2}$, define the optimal cost and capped second moment of cluster
$\clus{r}$ by
\begin{align*}
  \Rsum_r
  &\deq\sum_{i\in\clus{r}}\ocost{i},
  &
  \Qsum_r
  &\deq\sum_{i\in\clus{r}}
    \ocost{i}\min\set{\ocost{i},\sep}.
\end{align*}
Thus $\Rsum=\Rsum_1+\Rsum_2$.  For this proof, set
$\Qsum\deq\Qsum_1+\Qsum_2$.

Consider an anchor $i\in\clus{1}$.  Let
\begin{align*}
  U_i
  &\deq\sum_{k\in\clus{1}}\dst(\loc{i},\loc{k}),
  &
  V_i
  &\deq\sum_{k\in\clus{2}}\dst(\loc{i},\loc{k}).
\end{align*}
Hence $U_i$ is the total distance from the anchor to agents in its own cluster,
whereas $V_i$ is the total distance to agents in the other cluster.  
By triangle inequality and elementary algebra, we have the the social cost of the Proportional mechanism conditioned on $i$ being the anchor is
\begin{align*}
  \expect[\Prop]{\text{social cost}\condition i\text{ is the anchor}}
  \le 2U_i+3\Rsum_2
    -\frac{U_i(U_i+3\Rsum_2)}{U_i+V_i}.
\end{align*}
The last term is nonnegative slack.  The factor-$4$ analysis of
\citet{lu2010asymptotically} bounds the costs contributed by the two optimal
clusters separately, so the negative term in the preceding display does not
appear in its final estimate.  Here, we retain this term and sum it over all
possible anchors.  Define
\begin{align*}
  \Lambda_1
  \deq\sum_{i\in\clus{1}}
    \frac{U_i(U_i+3\Rsum_2)}{U_i+V_i},
\end{align*}
and define $\Lambda_2$ symmetrically by interchanging the two clusters.  For
every $i,k\in\clus{1}$, the triangle inequality through $\ofac{1}$ gives
$\dst(\loc{i},\loc{k})\le\ocost{i}+\ocost{k}$.  Therefore,
\begin{align*}
  \sum_{i\in\clus{1}}U_i
  =\sum_{i\in\clus{1}}\sum_{k\in\clus{1}}
    \dst(\loc{i},\loc{k})
  \le\sum_{i\in\clus{1}}\sum_{k\in\clus{1}}
    \bigl(\ocost{i}+\ocost{k}\bigr)
   =2\nclus{1}\Rsum_1.
\end{align*}
The same argument gives the symmetric bound for $\clus{2}$.
Consequently, averaging the fixed-anchor bounds over the uniformly chosen
anchor yields
\begin{align*}
  \SC(\Prop,\prof)
  \le 3\Rsum+\frac{\nclus{1}\Rsum_1+\nclus{2}\Rsum_2}{n}
    -\frac{\Lambda_1+\Lambda_2}{n}.
\end{align*}

It remains to quantify the retained slack.  The key aggregate estimates are
\begin{align}\label{eq:lambda-lb}
  \Lambda_1+\nclus{2}\Rsum_1
  &\ge\frac{n\Qsum_1}{2\sep},
  &
  \Lambda_2+\nclus{1}\Rsum_2
  &\ge\frac{n\Qsum_2}{2\sep}.
\end{align}
They follow from the fact that each optimal facility is a $1$-median of its
assigned cluster, together with the triangle inequality.  Crucially, these are
aggregate rather than pointwise estimates: in the first cluster, the residual
terms cancel after summation because
$\sum_{i\in\clus{1}}(\Rsum_1/\nclus{1}-\ocost{i})=0$, and the second cluster is
symmetric.

Adding the two inequalities in \cref{eq:lambda-lb} and substituting the result
into the averaged bound gives
\begin{align*}
  \SC(\Prop,\prof)
  \le 3\Rsum+\frac{\nclus{1}\Rsum_1+\nclus{2}\Rsum_2}{n}
    +\frac{\nclus{2}\Rsum_1+\nclus{1}\Rsum_2}{n}-\frac{\Qsum}{2\sep}
  =4\Rsum-\frac{\Qsum}{2\sep}
  =\Bigl(4-\frac{\dispersion}{2}\Bigr)\Rsum,
\end{align*}
which proves the theorem.
\end{proof}

The two approximation bounds vary in opposite directions with the dispersion parameter: the Proportional bound decreases as dispersion increases, whereas the Global Pair bound increases. This complementarity yields a uniform approximation guarantee for their mixture.

\begin{theorem}\label{thm:mixture-exact}
Let $\Mix{2/3}$ be the mechanism of \Cref{def:mix} at $\lambda=2/3$, which runs
the Proportional mechanism with probability $2/3$ and the Global Pair mechanism
with probability $1/3$.  We have $\Mix{2/3}$ is strategyproof on every Ptolemaic metric space, and its approximation ratio is exactly $11/3$.
\end{theorem}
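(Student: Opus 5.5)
Strategyproofness is immediate from \Cref{cor:mix-sp} with $\lambda=2/3$, so the work is the approximation ratio, which splits into an upper bound and a matching tight instance.

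\textbf{Upper bound.} Fix a profile $\prof$. If $\OPT(\prof)=0$, both mechanisms incur zero cost, so assume $\OPT(\prof)>0$. I first argue that the optimum is attained. In a general metric space one can restrict candidate facilities, but more cleanly, both \Cref{thm:gp-interp} and \Cref{thm:prop-interp} only use an optimal pair through $\ocost{i}$ and $\sep$. I would apply them to a near-optimal pair and let the slack vanish, or note that the proofs go through for any pair. Then, by linearity of expected social cost in the mixing weights,
\begin{align*}
\SC(\Mix{2/3},\prof)=\tfrac23\SC(\Prop,\prof)+\tfrac13\SC(\GPair,\prof)\le\Bigl[\tfrac23\bigl(4-\tfrac{\dispersion}{2}\bigr)+\tfrac13(3+\dispersion)\Bigr]\OPT(\prof)=\tfrac{11}{3}\OPT(\prof).
\end{align*}
The $\dispersion$ terms cancel exactly, which is why $\lambda=2/3$ is chosen. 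The only delicate point is the attainment issue: for the attained case, the bound is immediate.

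\textbf{Tightness.} This is the main obstacle. I need a family of profiles whose ratio tends to $11/3$ (or attains it). The natural candidate is the worst-case instance of one mechanism, say a low-dispersion instance where $\Prop$ achieves ratio near $4$ and $\GPair$ achieves ratio near $3$. That gives $\tfrac23\cdot4+\tfrac13\cdot3=11/3$.

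I would try the profile with a large block of $n$ agents at $0$ and a single agent at $1$. I would perturb it so that $\OPT>0$: two heavy blocks at distance $\sep$, plus one agent at distance $\epsilon\ll\sep$ from one block. Dispersion is then $\approx0$. I would compute both mechanisms' expected costs explicitly as functions of the block sizes, $\epsilon$ and $\sep$, and take limits chosen so that simultaneously $\SC(\Prop)/\OPT\to4$ and $\SC(\GPair)/\OPT\to3$.

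If a single family does not make both ratios extremal at once, I would directly optimize the mixed ratio over a small parametric family, for instance three locations with block sizes as parameters. I would check that its supremum is $11/3$. If the ratio is only approached in a limit, I would say so, reading ``exactly'' as the tight supremum. Otherwise I would search for a finite instance attaining it with equality.
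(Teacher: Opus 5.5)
Your strategyproofness argument and your upper bound coincide with the paper's: \Cref{cor:mix-sp}, and the combination $\tfrac23(4-\dispersion/2)+\tfrac13(3+\dispersion)=\tfrac{11}{3}$ of \Cref{thm:prop-interp,thm:gp-interp}. The gap is the tightness half. You never commit to an instance or compute anything, and the family you do describe fails as stated.

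Consider two heavy blocks, $\prof=\lranglefix{0^{A},\varepsilon,1^{B}}$. The Proportional mechanism's conditional cost is at most $2\varepsilon$ for an anchor at $0$ or at $1$, and at most $2A\varepsilon$ for the anchor at $\varepsilon$. Hence
\begin{align*}
  \frac{\SC(\Prop,\prof)}{\OPT(\prof)}\le\frac{4A+2B}{A+B+1}<2+\frac{2A}{A+B},
\end{align*}
which is bounded away from $4$ whenever $A/B$ stays bounded. The Global Pair ratio on this profile is below $3$, so the mixture then stays bounded away from $11/3$. If instead the perturbed agent sits next to the light end, e.g.\ $\lranglefix{0^{m},1-\varepsilon,1}$, every anchor at $0$ pays only $\varepsilon$, and the Proportional ratio tends to $1$.

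What is missing is the specific shape and scaling of the paper's family $\lranglefix{0^{m},\varepsilon,1}$ with $\varepsilon=m^{-2}$. It has one heavy block, the perturbed agent next to it, a single far agent, and $m\varepsilon\to0$. The anchor costs are as follows.
\begin{itemize}
  \item Each of the $m$ anchors at $0$ pays $\varepsilon(2-\varepsilon)/(1+\varepsilon)\approx2\varepsilon$.
  \item The anchor at $1$ pays $m\varepsilon(2-\varepsilon)/(m+1-\varepsilon)\approx2\varepsilon$.
  \item The single anchor at $\varepsilon$ pays $2m\varepsilon(1-\varepsilon)/(m\varepsilon+1-\varepsilon)\approx2m\varepsilon$.
\end{itemize}
The last step is where $m\varepsilon\to0$ is needed: that anchor's total distance $m\varepsilon+1-\varepsilon$ must be dominated by the far agent. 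Averaging over the $m+2$ anchors gives Proportional ratio $\to4$. For Global Pair, $\pw=m(1+\varepsilon)+1-\varepsilon$ and the weighted numerator is $m\varepsilon(3-2\varepsilon)$, so its ratio $\to3$. Hence the mixture ratio tends to $\tfrac23\cdot4+\tfrac13\cdot3=\tfrac{11}{3}$. As the paper notes, this value is approached but not attained.

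A smaller remark on your upper bound. Your fallback ``the proofs go through for any pair'' holds for \Cref{thm:gp-interp}, whose triple lemma uses only triangle inequalities. It does not hold for \Cref{thm:prop-interp}, which needs $U_i\ge\Rsum_1$, i.e.\ the $1$-median property \cref{eq:one-median}, and that property comes from optimality of the pair. Nor can one restrict to finitely many candidate facilities in a general (even Ptolemaic) space. A near-optimal-pair limit can be made to work, but only by carrying an explicit error term through \cref{eq:UV-bounds} and the scalar certificate. The paper simply applies the two theorems under their attainment hypothesis, so this point does not separate your argument from the paper's.
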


\begin{proof}
The upper bound follows immediately from \Cref{thm:gp-interp,thm:prop-interp}.
For tightness, consider the family
$\prof^{m}=\lranglefix{0^{m},\varepsilon,1}$ with $\varepsilon=m^{-2}$, for which
$\OPT(\prof^{m})=\varepsilon$, attained by facilities at $0$ and $1$.
Conditioning the Proportional mechanism on its anchor gives
\begin{align*}
  C_{0}=\frac{\varepsilon(2-\varepsilon)}{1+\varepsilon},\qquad
  C_{\varepsilon}=\frac{2m\varepsilon(1-\varepsilon)}{m\varepsilon+1-\varepsilon},\qquad
  C_{1}=\frac{m\varepsilon(2-\varepsilon)}{m+1-\varepsilon},
\end{align*}
whence $\Prop=(mC_{0}+C_{\varepsilon}+C_{1})/(m+2)$ and
\begin{align}\label{eq:family-conc}
  \frac{\Prop}{\OPT}
  =\frac{m}{m+2}\left[\frac{2-\varepsilon}{1+\varepsilon}
    +\frac{2(1-\varepsilon)}{m\varepsilon+1-\varepsilon}
    +\frac{2-\varepsilon}{m+1-\varepsilon}\right]\longrightarrow4 .
\end{align}
For the Global Pair mechanism the total pair weight is
$\pw=m(1+\varepsilon)+1-\varepsilon$
and the weighted numerator is $m\varepsilon(3-2\varepsilon)$, so
\begin{align*}
  \frac{\GPair}{\OPT}=\frac{m(3-2\varepsilon)}{m(1+\varepsilon)+1-\varepsilon}
  \longrightarrow3 .
\end{align*}
Hence $\Mix{\lambda}/\OPT\to4\lambda+3(1-\lambda)=3+\lambda$ for every
$\lambda$.  At $\lambda=2/3$ this limit is $11/3$, proving that the upper bound
is exact in every Euclidean dimension because the profiles embed isometrically
into every Euclidean space.
Finally, strategyproofness on Ptolemaic spaces follows from
\Cref{cor:mix-sp}.  The supremum is approached along the family and need not be
attained at a finite profile.
\end{proof}

Finally, \Cref{app:mixture-limits} shows that, for each mixing weight
$\lambda\in[0,1]$, the fixed mixture $\Mix{\lambda}$ has worst-case
approximation ratio is at least $\max\set{3+\lambda,4-(4\sqrt3-6)\cdot\lambda}$, so the best
ratio in this family lies between $(74+4\sqrt3)/23=3.517$ and $11/3$,
and its exact value remains open.
 
\section{A Lower Bound on the Approximation Ratio of Any Strategyproof Mechanism}
\label{sec:lowerbound}
We begin with a finite two-profile lower bound and then optimize its parameters
asymptotically.  The construction lies on the real line and uses two profiles
related by a single misreport.  Fix integers $2\le k<m$ and set $n=2m+k$.
Consider the two profiles
\begin{align}\label{eq:profiles}
  \lbprofA_{m,k}=\lranglefix{0^{m},(\tfrac12)^{k},1^{m}},
  \qquad
  \lbprofB_{m,k}=\lranglefix{-m,\,0^{m},(\tfrac12)^{k-1},1^{m}},
\end{align}
where $z^{\ell}$ denotes $\ell$ agents reporting $z$.  Thus $\lbprofB_{m,k}$ is
obtained from $\lbprofA_{m,k}$ when one designated agent of the midpoint block
misreports her location $\tfrac12$ as the distant point $-m$.  We suppress the
subscripts when they are clear.

The design is deliberate.  The two endpoint blocks of size $m$ make it costly
for a mechanism to ignore either end.  Because the midpoint agents are
co-located, they incur the same expected cost; thus the incentive constraint of
the one deviating agent controls a $\btheta{k}$ share of the social cost at both
profiles.  The far outlier at $-m$ forces the optimal solution to relocate a
facility, so the two profiles have very different optima.

\begin{theorem}\label{thm:lb-finite}
Let $2\le k<m$ and $n=2m+k$.  For every mechanism $\Mech$ that is
strategyproof on $\reals$, consider the two profiles $\lbprofA_{m,k}$ and $\lbprofB_{m,k}$ in \cref{eq:profiles}.  Then
\begin{align*}
  \max\set{\frac{\SC(\Mech,\lbprofA)}{\OPT(\lbprofA)},\;
           \frac{\SC(\Mech,\lbprofB)}{\OPT(\lbprofB)}}
  \;\ge\;
  \Rlb{m}{k}=\frac{m(2m-k-1)}{k(k-1)+2m(m-k)} .
\end{align*}
In particular no strategyproof mechanism for $n=2m+k$ agents has approximation
ratio smaller than $\Rlb{m}{k}$.
\end{theorem}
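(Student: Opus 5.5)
The plan is to reduce both social costs to a single scalar, the expected cost of a midpoint agent, and then trade off two linear bounds in that scalar. Let $c$ be the expected cost of an agent at $\tfrac12$ under $\Mech(\lbprofA)$, and let $c'$ be the expected cost, under $\Mech(\lbprofB)$, of an agent whose true location is $\tfrac12$. In $\lbprofB$ this is both the deviator's true cost and the cost of each of the $k-1$ remaining midpoint agents, since they are all at $\tfrac12$. Strategyproofness (\Cref{def:sp}) applied to the deviating midpoint agent gives $c\le c'$.

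First I would compute the two optima.
\begin{itemize}
\item For $\OPT(\lbprofA)$, facilities at $0$ and $1$ cost $k/2$. Any pair leaving an endpoint block unserved costs at least $m/2>k/2$, so $\OPT(\lbprofA)=k/2$.
\item For $\OPT(\lbprofB)$, facilities at $-m$ and $\tfrac12$ cost $m$. A single facility serving all $2m+k-1$ non-outlier agents costs at least $m$ (its median cost). Leaving the outlier unserved costs at least $m$ from the outlier alone. Hence $\OPT(\lbprofB)=m$.
\end{itemize}

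Next I would prove pointwise lower bounds for every realized facility pair and take expectations. Let $t$ be the distance from $\tfrac12$ to the nearer facility.
\begin{itemize}
\item In $\lbprofA$, one of the endpoint locations $0,1$ is not served by the facility within $t$ of $\tfrac12$ unless $t\ge\tfrac12$. So at least one endpoint block pays at least $\tfrac12-t$ per agent, and the realized cost is at least $kt+m(\tfrac12-t)$. This linear bound is trivially valid for $t\ge \tfrac12$ too.
\item In $\lbprofB$, the outlier and the two endpoint blocks together pay at least $m$. Either the outlier pays $m$ itself, or one facility sits near $-m$ and the single remaining facility serves both blocks at total cost at least $m$. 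So the realized cost is at least $m+(k-1)t$.
\end{itemize}
Taking expectations, linearity gives
\[
\SC(\Mech,\lbprofA)\ge \tfrac m2-(m-k)c\ge \tfrac m2-(m-k)c',\qquad
\SC(\Mech,\lbprofB)\ge m+(k-1)c',
\]
where the second step in the first inequality uses $c\le c'$ and $m>k$. This yields
\[
\frac{\SC(\Mech,\lbprofA)}{\OPT(\lbprofA)}\ge \frac{m-2(m-k)c'}{k},\qquad
\frac{\SC(\Mech,\lbprofB)}{\OPT(\lbprofB)}\ge 1+\frac{(k-1)c'}{m}.
\]

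Finally, the first bound is decreasing in $c'$ and the second is increasing, so their maximum is at least their common value at the crossing point
\[
c^*=\frac{m(m-k)}{k(k-1)+2m(m-k)}.
\]
Substituting $c^*$ into the second bound gives
\[
\frac{k(k-1)+2m(m-k)+(k-1)(m-k)}{k(k-1)+2m(m-k)}=\frac{m(2m-k-1)}{k(k-1)+2m(m-k)}=\Rlb{m}{k}.
\]

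I expect the main obstacle to be the pointwise bound in $\lbprofB$: showing that the outlier plus the endpoint blocks always pay at least $m$, whatever $t$ is, including $t>\tfrac12$ and facilities placed between the blocks. I would handle it with the case split on whether some facility lies within distance $m$ of $-m$ on the left side. If none does, the outlier alone pays at least $m$. If one does, the other facility must serve both blocks at $0$ and $1$, which costs at least $m$ by the $1$-median argument; the outlier's nonnegative cost only helps.
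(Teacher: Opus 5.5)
Your overall plan follows the paper. The common ingredients are the incentive constraint $c\le c'$, the two pointwise bounds (the paper's \cref{eq:g1,eq:g2} with the midpoint block added), and a linear combination. There is, however, a real gap in the step you flag as the crux, the bound $d_{-m}+m\,d_0+m\,d_1\ge m$ at $\lbprofB$. Your case split breaks when a facility lies strictly between $-m$ and $0$. Such a facility serves the outlier at cost below $m$, so your first branch does not apply. It can also be the nearer facility for the $0$-block, so the claim that ``the other facility must serve both blocks'' is false. For example, take facilities at $-\varepsilon$ and $1$ with small $\varepsilon>0$. The outlier pays $m-\varepsilon$ and the $0$-block pays $m\varepsilon$ to the facility at $-\varepsilon$. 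The total $m-\varepsilon+m\varepsilon$ is at least $m$ only because the outlier's and the $0$-block's costs combine and $m\ge1$. Reading ``on the left side'' as a facility at position $\le -m$ does not help: facilities at $-m/2$ and $1$ then defeat the first branch, since the outlier pays only $m/2$. The missing idea is the paper's pigeonhole argument over the three sites $-m,0,1$, two of which must share a facility:
\begin{itemize}
\item if $-m$ and $0$ share it, then $d_{-m}+m\,d_0\ge d_{-m}+d_0\ge m$;
\item if $-m$ and $1$ share it, then $d_{-m}+d_1\ge m+1$;
\item if $0$ and $1$ share it, then $m(d_0+d_1)\ge m$.
\end{itemize}

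Your justification of the $\lbprofA$ bound has the same kind of flaw. You claim that one endpoint is not served by the facility within $t$ of $\tfrac12$ unless $t\ge\tfrac12$. This is false: with facilities at $\tfrac12$ and $10$, $t=0$ and both endpoints are served by the facility at $\tfrac12$. The bound $d_0+d_1\ge\tfrac12-t$ is nonetheless true, for two reasons:
\begin{itemize}
\item an endpoint served by the facility near $\tfrac12$ pays at least $\tfrac12-t$ by the triangle inequality;
\item otherwise both endpoints share the other facility and pay at least $1$ together.
\end{itemize}
Equivalently, pigeonhole on $0,\tfrac12,1$ gives \cref{eq:g1}.

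With these two repairs your proof is complete, and the rest is correct. Only the upper bounds $\OPT(\lbprofA)\le k/2$ and $\OPT(\lbprofB)\le m$ enter the ratio bounds, so your incomplete lower-bound arguments for the optima are harmless. Your crossing-point argument in $c'$ is equivalent to the paper's weighted sum with multiplier $(k-1)/(m-k)$ on $\SC(\Mech,\lbprofA)$, which is exactly the multiplier that cancels $c'$. Your values of $c^*$ and $\Rlb{m}{k}$ check out.
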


\begin{proof}
If the mechanism has infinite expected social cost at either profile the claim
is immediate, so assume both are finite; since $k\ge2$, all expectations below
are then finite as well.  For a profile $P$ write
$\dst_{P}(z)=\cost(\Mech(P),z)$ for the expected cost of a point $z$ under
the outcome law at $P$, and set $a=\dst_{\lbprofA}(\tfrac12)$ and
$b=\dst_{\lbprofB}(\tfrac12)$.
Note that $b$ evaluates the outcome law at $\lbprofB$ at the \emph{true}
location $\tfrac12$ of the designated agent, even though she reports $-m$ there.

Strategyproofness first provides a constraint connecting the two profiles.
The designated agent has true location $\tfrac12$.  Reporting truthfully
produces $\lbprofA$ and deviating to $-m$ produces $\lbprofB$, so
\Cref{def:sp} gives $a\le b$.

We next establish two geometric inequalities that hold for every deterministic
facility pair.
Fix any deterministic facility pair and abbreviate $\dst_{z}$ for the cost of
the point $z$ under it.  First,
\begin{align}\label{eq:g1}
  \dst_{0}+\dst_{1/2}+\dst_{1}\;\ge\;\tfrac12 .
\end{align}
Indeed, assign each of $0,\tfrac12,1$ to a nearest facility; two of them are
assigned to the same facility, and the sum of their distances to it is at least
their separation, which is at least $\tfrac12$.  Second,
\begin{align}\label{eq:g2}
  \dst_{-m}+m\,\dst_{0}+m\,\dst_{1}\;\ge\;m .
\end{align}
Assign $-m,0,1$ to nearest facilities; again two share a facility.  If $-m$ and
$0$ share it then $\dst_{-m}+\dst_{0}\ge m$; if $0$ and $1$ share it then
$m(\dst_{0}+\dst_{1})\ge m$; if $-m$ and $1$ share it then
$\dst_{-m}+\dst_{1}\ge m+1$.  As $m\ge1$ and all distances are nonnegative, each
case implies \cref{eq:g2}.

The inequalities in \cref{eq:g1,eq:g2} hold for every deterministic outcome and
therefore remain valid after taking expectations under any randomized outcome
law.

We now combine these geometric inequalities with the incentive constraint.
Let $t=\frac{k-1}{m-k}>0$, so that $t(m-k)=k-1$.  Apply \cref{eq:g1} at
$\lbprofA$, where the three occupied sites carry $m$, $k$, and $m$ agents.  At
$\lbprofB$, remove the contribution of the $k-1$ remaining midpoint agents and
apply \cref{eq:g2}.  These two applications give
\begin{align*}
  t\,\SC(\Mech,\lbprofA)+(k-1)a
  &=tm\bigl(\dst_{\lbprofA}(0)+\dst_{\lbprofA}(\tfrac12)
       +\dst_{\lbprofA}(1)\bigr)
  \ge\frac{tm}{2},\\
  \SC(\Mech,\lbprofB)-(k-1)b
  &=\dst_{\lbprofB}(-m)+m\,\dst_{\lbprofB}(0)+m\,\dst_{\lbprofB}(1)
  \ge m .
\end{align*}
Adding these inequalities and using $a\le b$ yields
\begin{align}\label{eq:combined}
  t\,\SC(\Mech,\lbprofA)+\SC(\Mech,\lbprofB)
  \;\ge\;m\Bigl(1+\frac t2\Bigr).
\end{align}

We next compute the optimal social cost of each profile.
At $\lbprofA$, since $m>k$, replacing the coefficient $m$ of each endpoint cost
by $k$ can only decrease the social cost.  Hence \cref{eq:g1} gives
$\SC(\lbprofA,\pair)\ge k/2$, while facilities at $\set{0,1}$ attain $k/2$.
Thus $\OPT(\lbprofA)=k/2$.  At $\lbprofB$, \cref{eq:g2} and the nonnegativity of
the midpoint costs give $\SC(\lbprofB,\pair)\ge m$, while facilities at
$\set{-m,\tfrac12}$ attain $m$.  Thus $\OPT(\lbprofB)=m$.

We can now derive the claimed ratio.
Suppose both profiles had ratio at most $\ratio$.  By the two optimum values,
$\SC(\Mech,\lbprofA)\le\ratio k/2$ and $\SC(\Mech,\lbprofB)\le\ratio m$.
Substituting into \cref{eq:combined} gives
$\ratio\bigl(m+\tfrac{tk}{2}\bigr)\ge m\bigl(1+\tfrac t2\bigr)$, and inserting
$t=(k-1)/(m-k)$ and simplifying gives
$\ratio\ge\frac{m(2m-k-1)}{k(k-1)+2m(m-k)}$.
\end{proof}

By optimizing the parameters $m$ and $k$, we obtain the following lower bound on the approximation ratio of any strategyproof mechanism for the
two-facility game.

\begin{theorem}\label{thm:lb-asymptotic}
Any strategyproof mechanism for the two-facility game has
approximation ratio at least $(1+\sqrt{2})/2$.
\end{theorem}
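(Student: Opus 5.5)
The plan is to derive \Cref{thm:lb-asymptotic} directly from \Cref{thm:lb-finite}. That result shows that for every pair $2\le k<m$, no strategyproof mechanism has ratio below $\Rlb{m}{k}$. It therefore suffices to show that $\sup_{2\le k<m}\Rlb{m}{k}\ge(1+\sqrt2)/2$. If the number of agents is to be treated as arbitrary, we note that the lower bound is taken over instances with $n=2m+k$ agents for suitable large $m,k$, just as in the statement of \Cref{thm:lb-finite}.

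First I pass to the continuum limit. Set $k=\lfloor xm\rfloor$ for a fixed ratio $x\in(0,1)$ and let $m\to\infty$. Dividing the numerator and denominator of $\Rlb{m}{k}$ by $m^2$ gives
\begin{align*}
  \Rlb{m}{k}\longrightarrow \lbfun(x)\deq\frac{2-x}{x^{2}+2(1-x)}=\frac{2-x}{x^{2}-2x+2},
\end{align*}
since the lower-order terms $-m$ in the numerator and $-k$ in the denominator vanish after normalization. For each fixed $x\in(0,1)$ we eventually have $2\le k<m$, so every value $\lbfun(x)$ is a lower bound on the approximation ratio. This holds up to an arbitrarily small $\varepsilon$ for finite $m$; the approximation ratio is a fixed number, so it must be at least $\lbfun(x)$.

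Next I maximize $\lbfun$ over $(0,1)$. Substitute $y=2-x$; the denominator becomes $y^2-2y+2$, so $\lbfun=y/(y^2-2y+2)=1/(y-2+2/y)$. The expression $y+2/y$ is minimized at $y=\sqrt2$, which corresponds to $x=2-\sqrt2\in(0,1)$, consistent with the introduction. The minimum value is $2\sqrt2-2$, so $\lbfun=1/(2\sqrt2-2)=(\sqrt2+1)/2$. Finally, choosing $k_m=\lfloor(2-\sqrt2)m\rfloor$ gives $\Rlb{m}{k_m}\to(1+\sqrt2)/2$, which yields the theorem.

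The only mildly delicate step is the limit argument. One has to check that $k_m\ge2$ and $k_m<m$ for large $m$, which holds because $0<2-\sqrt2<1$. One also has to confirm that the finite bounds converge, which follows because $\Rlb{m}{k}$ is a rational function whose leading-order terms are as computed. Nothing beyond \Cref{thm:lb-finite} is needed.
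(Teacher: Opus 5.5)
Your proposal is correct and follows the paper's proof: you pass to the limit $k/m\to\alpha$ to get $f(\alpha)=(2-\alpha)/(\alpha^{2}-2\alpha+2)$, evaluate at $\alpha=2-\sqrt2$, and invoke \Cref{thm:lb-finite} along $k_m\approx(2-\sqrt2)m$, reading the bound over arbitrarily large $n$ exactly as the paper does. The only difference is cosmetic: you maximize $f$ via the substitution $y=2-\alpha$ and $y+2/y\ge2\sqrt2$ rather than by differentiating, and you take $k_m$ as a floor rather than the nearest integer.
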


\begin{proof}
Let $k/m\to\alpha\in(0,1)$.  Then
$\Rlb{m}{k}\to\lbfun(\alpha)=\frac{2-\alpha}{\alpha^{2}-2\alpha+2}$, whose
derivative is
\begin{align*}
  \lbfun'(\alpha)=\frac{\alpha^{2}-4\alpha+2}{(\alpha^{2}-2\alpha+2)^{2}} .
\end{align*}
The unique root of $\alpha^{2}-4\alpha+2$ in $(0,1)$ is $\alpha^{*}=2-\sqrt2$.
Since $\lbfun(0)=\lbfun(1)=1$, this stationary point is the unique maximizer,
and $\lbfun(2-\sqrt2)=\lbconst$.  Taking $k_{m}=\lfloor(2-\sqrt2)m\rceil$, where
$\lfloor\cdot\rceil$ is rounding to the nearest integer, gives $2\le k_{m}<m$
for all large $m$ and $\Rlb{m}{k_{m}}\to\lbconst$.  By \Cref{thm:lb-finite},
for every $\varepsilon>0$ there are arbitrarily large $n$ at which every
strategyproof mechanism has an instance of ratio at least $\lbconst-\varepsilon$;
hence no mechanism attains a worst-case ratio below $\lbconst$.

\end{proof}
 
\section{Conclusion}
\label{sec:conclusion}
We introduced the Global Pair mechanism and proved that it is strategyproof on
every Ptolemaic metric space.  Although Global Pair and the Proportional
mechanism each have worst-case approximation ratio $4$, their profile-dependent
bounds vary in opposite directions with the dispersion parameter.  Randomizing
between them therefore yields a strategyproof mechanism with worst-case ratio
$11/3$.  We also established a lower bound of
$(1+\sqrt{2})/2\approx1.207$ for randomized strategyproof mechanisms, using a two-profile block-amplification construction.
Together, these results narrow the gap from
$[1.045,4]$ to $[(1+\sqrt{2})/2,11/3]$.  

Determining the optimal randomized strategyproof approximation ratio remains
open.
Since we have shown that any mixture of the Proportional and Global Pair mechanisms has worst-case ratio at least $3.517$, improving the upper bound below this value may require a new mechanism.  On the lower-bound side, our construction is limited to two profiles, and it is unclear whether more profiles can yield a better lower bound.  
Moreover, extending the Global Pair mechanism to more than two facilities in a strategyproof manner with bounded approximation ratio is an interesting open question.
We leave these questions for future work.
 
\section*{Acknowledgments}
\label{sec:ack}
The authors used ChatGPT Pro 5.6 to assist in formalizing certain proof steps, which were subsequently rewritten and independently verified by the authors. In particular, ChatGPT assisted with the analysis in \Cref{app:mixture-limits}, which bounds the optimal worst-case approximation ratio within the family of fixed mixtures over all mixing weights $\lambda\in[0,1]$. The authors take full responsibility for the correctness and presentation of all arguments in this paper.
 
\bibliographystyle{plainnat}

\appendix

\section{Truthfulness of the Global Pair mechanism}\label{app:gp-sp}

We prove \Cref{thm:gp-sp}.  Fix a Ptolemaic metric space, a strategic agent with
true location $x$ and reported location $x\primed$, and the other reports
$z_{1},\dots,z_{m}$.  Write
\begin{align*}
  \devd=\dst(x,x\primed),\qquad \tdist{i}=\dst(x,z_{i}),\qquad \rdist{i}=\dst(x\primed,z_{i}),
  \qquad \odist{i}{j}=\dst(z_{i},z_{j}),
\end{align*}
and $\Ssum=\sum_{i}\tdist{i}$, $\Tsum=\sum_{i}\rdist{i}$, $\pw=\sum_{i<j}\odist{i}{j}$.  Let
\begin{align*}
  \Aterm=\sum_{i<j}\odist{i}{j}\min\set{\tdist{i},\tdist{j}},
  \qquad
  \Bterm=\sum_{i}\rdist{i}\min\set{\devd,\tdist{i}} .
\end{align*}
By \Cref{def:gp}, the agent's expected cost when truthful is
$C_{x}=\Aterm/(\pw+\Ssum)$, and when she reports $x\primed$ it is
$C_{x\primed}=(\Aterm+\Bterm)/(\pw+\Tsum)$: the
denominators are the total pair weights of the two profiles, the numerator $\Aterm$
collects the pairs among the other agents, and $\Bterm$ collects the pairs involving
the strategic agent, whose cost to a truthful agent at $x$ is
$\min\set{\devd,\tdist{i}}$.  Hence $C_{x}\le C_{x\primed}$ is equivalent to
\begin{align}\label{eq:sp-target}
  \Aterm(\Tsum-\Ssum)\le \Bterm(\pw+\Ssum).
\end{align}
Since $\Ssum\ge0$ it suffices to prove the stronger statement
\begin{align}\label{eq:sp-strong}
  \Aterm(\Tsum-\Ssum)\le \Bterm\pw .
\end{align}

Set $\delta_{i}=\rdist{i}-\tdist{i}$ and $\eta_{i}=\rdist{i}\min\set{\devd,\tdist{i}}$.  Expanding
both sides of \cref{eq:sp-strong} and grouping by the pair $\set{j,k}$
generating each $\odist{j}{k}$ gives the identity
\begin{align}\label{eq:sp-expand}
  \Bterm\pw-\Aterm(\Tsum-\Ssum)=\sum_{j<k}\odist{j}{k}\sum_{i}\bigl[\eta_{i}-\min\set{\tdist{j},\tdist{k}}\delta_{i}\bigr].
\end{align}

\begin{claim}\label{cl:eta}
For every $i$ and every $0\le t\le \tdist{i}$ we have $\eta_{i}-t\delta_{i}\ge0$.
\end{claim}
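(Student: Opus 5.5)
Since $\eta_{i}-t\delta_{i}$ is affine in $t$, it suffices to check nonnegativity at the endpoints $t=0$ and $t=\tdist{i}$ of the interval $[0,\tdist{i}]$. An affine function that is nonnegative at both endpoints of an interval is nonnegative on the whole interval. At $t=0$ the expression is $\eta_{i}=\rdist{i}\min\set{\devd,\tdist{i}}\ge0$, which is immediate because it is a product of distances. So the whole claim reduces to the single inequality
\begin{align*}
  \rdist{i}\min\set{\devd,\tdist{i}}\;\ge\;\tdist{i}\bigl(\rdist{i}-\tdist{i}\bigr).
\end{align*}
We may assume $\rdist{i}>\tdist{i}$, since otherwise the right-hand side is nonpositive and there is nothing to prove.

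We split according to which term attains the minimum. If $\tdist{i}\le\devd$, the left side is $\rdist{i}\tdist{i}$, which is at least $\tdist{i}\rdist{i}-\tdist{i}^{2}$, so the inequality holds trivially. If $\devd<\tdist{i}$, we must show $\rdist{i}\devd\ge\tdist{i}(\rdist{i}-\tdist{i})$. The triangle inequality in the triangle $x,x\primed,z_{i}$ gives $\rdist{i}-\tdist{i}\le\devd$, so the right side is at most $\tdist{i}\devd$. Also $\rdist{i}>\tdist{i}$, so $\rdist{i}\devd\ge\tdist{i}\devd$, and this closes the case.

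The argument uses only the triangle inequality; no Ptolemaic structure is needed here. The only step that requires any care is the reduction to the endpoint $t=\tdist{i}$ via affinity in $t$, together with the case split on $\min\set{\devd,\tdist{i}}$. Neither step is a real obstacle.
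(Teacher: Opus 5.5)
Your proof is correct and follows the paper's argument: both reduce to the endpoint $t=\tdist{i}$ (you via affinity in $t$, the paper via monotonicity when $\delta_{i}>0$), split on $\min\set{\devd,\tdist{i}}$, and settle the case $\devd<\tdist{i}$ with the triangle inequality $\rdist{i}\le\tdist{i}+\devd$. The only cosmetic difference is in that last case: the paper shows the endpoint expression is at least $\devd^{2}$, while you chain $\rdist{i}\devd\ge\tdist{i}\devd\ge\tdist{i}(\rdist{i}-\tdist{i})$.
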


\begin{proof}
If $\delta_{i}\le0$ this is immediate.  If $\delta_{i}>0$ the left-hand side is
decreasing in $t$, so it is minimized at $t=\tdist{i}$, where it equals
$\rdist{i}\min\set{\devd,\tdist{i}}-\tdist{i}(\rdist{i}-\tdist{i})$.  When $\tdist{i}\le r$ this is
$\tdist{i}^{2}\ge0$.  When $\tdist{i}\ge r$ it equals $\rdist{i}r-\tdist{i}\rdist{i}+\tdist{i}^{2}$,
which is at least $r^{2}$ because $\rdist{i}\le \tdist{i}+r$ by the triangle
inequality.
\end{proof}

By \Cref{cl:eta} the terms of \cref{eq:sp-expand} with $i\in\set{j,k}$ are
nonnegative, since there $\min\set{\tdist{j},\tdist{k}}\le \tdist{i}$.  It therefore remains
to handle the terms with $i\notin\set{j,k}$, which we group by unordered triples.
Relabel a triple so that $\tdist{1}\le \tdist{2}\le \tdist{3}$; its contribution is
\begin{align*}
  \Psi=\odist{2}{3}(\eta_{1}-\tdist{2}\delta_{1})+\odist{1}{3}(\eta_{2}-\tdist{1}\delta_{2})
       +\odist{1}{2}(\eta_{3}-\tdist{1}\delta_{3}).
\end{align*}
By \Cref{cl:eta} the last two brackets are nonnegative, so writing
\begin{align*}
  N=\tdist{2}\delta_{1}-\eta_{1},\qquad
  C_{2}=\eta_{2}-\tdist{1}\delta_{2},\qquad
  C_{3}=\eta_{3}-\tdist{1}\delta_{3},
\end{align*}
only $N$ can be positive, and it suffices to prove
\begin{align}\label{eq:triple-target}
  N \odist{2}{3}\le C_{2}\odist{1}{3}+C_{3}\odist{1}{2}.
\end{align}
We distinguish three cases.

\xhdr{Case 1: $r\ge \tdist{2}$.}
Then $\min\set{\devd,\tdist{j}}=\tdist{j}$ for $j\le2$, so
$N=\tdist{2}(\rdist{1}-\tdist{1})-\rdist{1}\tdist{1}\le \rdist{1}(\tdist{2}-\tdist{1})$ and, for $j\in\set{2,3}$,
$C_{j}\ge \rdist{j}(\tdist{2}-\tdist{1})$.  The Ptolemy inequality in \cref{eq:ptolemy}, applied to the four points
$x\primed,z_{1},z_{2},z_{3}$, reads $\rdist{1}\odist{2}{3}\le \rdist{2}\odist{1}{3}+\rdist{3}\odist{1}{2}$,
which gives \cref{eq:triple-target}.

\xhdr{Case 2: $r<\tdist{2}$ and $\tdist{1}\ge r$.}
Then $\min\set{\devd,\tdist{i}}=r$ for all $i$ in the triple, so
$N=\tdist{2}\delta_{1}-\rdist{1}r\le K\deq r(\tdist{2}-\tdist{1}-r)$, using $\rdist{1}\le \tdist{1}+r$.
If $K\le0$ there is nothing to prove.  Otherwise $C_{2},C_{3}\ge K$, and the
triangle inequality $\odist{2}{3}\le \odist{1}{3}+\odist{1}{2}$ gives \cref{eq:triple-target}.

\xhdr{Case 3: $a\deq \tdist{1}<r<b\deq \tdist{2}\le c\deq \tdist{3}$.}
Write $\pi=\rdist{2}$ and $\theta=\rdist{3}$.  Here $\eta_{1}=\rdist{1}\tdist{1}$ while
$\eta_{2}=\pi r$ and $\eta_{3}=\theta r$, so
\begin{align*}
  N\le K\deq r(b-a)-a^{2},\qquad
  C_{2}=(r-a)\pi+ab,\qquad
  C_{3}=(r-a)\theta+ac .
\end{align*}
If $K\le0$ the claim is immediate, so assume $K>0$.  The triangle inequalities
give $\odist{1}{3}\ge c-a$, $\odist{1}{2}\ge b-a$ and
$\odist{2}{3}\le\min\set{b+c,\pi+\theta}$, so it suffices to prove
\begin{align}\label{eq:case3}
  L\deq C_{2}(c-a)+C_{3}(b-a)\;\ge\;K\min\set{b+c,\pi+\theta}.
\end{align}
Suppose first $\pi+\theta\le b+c$.  Expanding,
\begin{align*}
  L-K(\pi+\theta)=(\pi+\theta)a(2a-b)+\pi(r-a)(c-b)+a\bigl[2bc-a(b+c)\bigr],
\end{align*}
which is visibly nonnegative when $b\le2a$.  When $b>2a$, using
$\pi+\theta\le b+c$ and $\pi\ge b-r$, the expression is at least
$(b-r)(r-a)(c-b)+a\bigl[b(c-b)+a(b+c)\bigr]\ge0$.  Suppose instead
$\pi+\theta\ge b+c$.  Minimizing the positive weighted $\pi,\theta$ terms subject
to $\pi\ge b-r$, $\theta\ge c-r$, $\pi+\theta\ge b+c$ and the triangle upper
bounds $\pi\le b+r$, $\theta\le c+r$, the minimum occurs at $\pi=b-r$ and
$\theta=c+r$, where
\begin{align*}
  L-K(b+c)\ge r(c-b)(a+b-r)+a^{2}(b+c)\ge0 .
\end{align*}
This proves \cref{eq:case3} and hence \cref{eq:triple-target}.

Every grouped contribution in \cref{eq:sp-expand} is therefore nonnegative,
which establishes \cref{eq:sp-strong}, hence \cref{eq:sp-target}, and completes
the proof of \Cref{thm:gp-sp}. \qed

\section{Failure of truthfulness on a general metric}\label{app:gp-counterexample}

Let $\dom$ be the vertex set of the complete bipartite graph $K_{2,3}$, with
bipartition $\set{x,y}$ and $\set{z_{1},z_{2},z_{3}}$, equipped with the
unit-edge shortest-path metric.  Thus
\begin{align*}
  \dst(x,z_{i})=\dst(y,z_{i})=1,
  \qquad
  \dst(x,y)=\dst(z_{i},z_{j})=2\qquad(i\ne j).
\end{align*}
This metric is not Ptolemaic: for the four points $x,y,z_{1},z_{2}$,
\begin{align*}
  \dst(x,y)\cdot\dst(z_{1},z_{2})=4\;>\;2
  =\dst(x,z_{1})\cdot\dst(y,z_{2})+\dst(x,z_{2})\cdot\dst(y,z_{1}),
\end{align*}
so \Cref{thm:gp-sp} does not apply to it.  Take $n=14$ agents: eleven at $x$,
one of whom is strategic, and one at each of $z_{1},z_{2},z_{3}$.

\xhdr{Truthful reporting.}
The reported profile is eleven agents at $x$ and one at each $z_{i}$.  Pairs of
agents at $x$ carry weight $0$, the $33$ pairs of the form $\set{x,z_{i}}$ carry
weight $1$ each, and the $3$ pairs $\set{z_{i},z_{j}}$ carry weight $2$ each, so $\pw=11\cdot3\cdot1+\binom{3}{2}\cdot2=39$.
Whenever a selected pair contains an agent reporting $x$, a facility is opened
at $x$ and the strategic agent pays nothing; she pays $1$ exactly on the pairs
$\set{z_{i},z_{j}}$, of total weight $6$.  Her expected cost is therefore $C_{x}=\frac{6}{39}=\frac{2}{13}$.

\xhdr{Misreporting $x\to y$.}
The reported profile becomes ten agents at $x$, one at $y$, and one at each
$z_{i}$.  The strategic agent is still located at $x$, so her cost on a selected
pair is her distance to the nearer of the two opened facilities:
\begin{center}
\begin{tabular}{lcc}
  \toprule
  Pair type & Total weight & True cost to the deviator\\
  \midrule
  $\set{x,z_{i}}$   & $10\cdot3\cdot1=30$ & $0$\\
  $\set{x,y}$       & $10\cdot1\cdot2=20$ & $0$\\
  $\set{y,z_{i}}$   & $1\cdot3\cdot1=3$   & $1$\\
  $\set{z_{i},z_{j}}$ & $3\cdot2=6$       & $1$\\
  \bottomrule
\end{tabular}
\end{center}
Hence the total pair weight is $\pw'=30+20+3+6=59$ and
$C_{y}=\frac{3+6}{59}=\frac{9}{59}$.

Since $C_{x}-C_{y}=\frac{1}{767}>0$,
the misreport strictly lowers the strategic agent's expected cost, so the Global
Pair mechanism is not strategyproof on $\dom$.

The block of co-located agents cannot be shrunk.  Running the same computation
with $b$ agents at $x$ in place of eleven gives $\pw=3b+6$ and $\pw'=5b+4$,
hence
\begin{align*}
  C_{x}=\frac{2}{b+2},
  \qquad
  C_{y}=\frac{9}{5b+4},
  \qquad
  C_{x}-C_{y}=\frac{b-10}{(b+2)(5b+4)} ,
\end{align*}
which is negative for $b\le9$ and zero for $b=10$.  Thus $b=11$, that is
$n=14$, is the smallest instance of this family at which the deviation is
profitable.

\section{The metric triple lemma}\label{app:triple}

\begin{lemma}[Metric triple lemma]\label{lem:triple}
Let $\sep=1$ and $H(c)=3c+c\min\set{c,1}$.  For every triple of agents
$\set{i,j,k}$ in every metric space,
\begin{align*}
  \Phi_{ijk}\le\dst(\loc{j},\loc{k})\cdot H(\ocost{i})
   +\dst(\loc{i},\loc{k})\cdot H(\ocost{j})
   +\dst(\loc{i},\loc{j})\cdot H(\ocost{k}),
\end{align*}
with $\Phi_{ijk}$ as in \cref{eq:triple-weight}.
\end{lemma}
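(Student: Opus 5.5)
The plan is to prove the inequality by a direct case analysis on a single triangle, with the triangle inequality through the optimal facilities $\ofac{1},\ofac{2}$ as the only tool. Throughout, $\sep=1$. For two agents $a,b$ I will use two bounds. If $a,b$ are assigned to the same optimal facility, then $d_{ab}\le\ocost{a}+\ocost{b}$. If they are assigned to different facilities, then $d_{ab}\le\ocost{a}+\ocost{b}+1$ and $d_{ab}\ge 1-\ocost{a}-\ocost{b}$.

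The structural observation is pigeonhole: among the three agents, two, call them $p,q$, share an optimal facility, and the third is $r$. I would rewrite $\Phi_{ijk}$ edge by edge as
\[
\Phi=d_{pq}\min\set{d_{pr},d_{qr}}+d_{pr}\min\set{d_{pq},d_{qr}}+d_{qr}\min\set{d_{pq},d_{pr}}.
\]
The right-hand side charges agent $r$ through $d_{pq}$, agent $q$ through $d_{pr}$, and agent $p$ through $d_{qr}$. Accordingly I bound each summand of $\Phi$ by a quantity carrying the matching edge weight, as follows.

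First, the last two terms. Since $\min\set{d_{pq},d_{qr}}\le d_{pq}\le\ocost{p}+\ocost{q}$, the second term is at most $d_{pr}(\ocost{p}+\ocost{q})$. The piece $d_{pr}\ocost{q}$ is already charged correctly. The stray piece $d_{pr}\ocost{p}$ must be moved onto $d_{qr}$, using $d_{pr}\le d_{pq}+d_{qr}\le \ocost{p}+\ocost{q}+d_{qr}$. The third term is symmetric. This yields the linear charges $2\ocost{q}d_{pr}+2\ocost{p}d_{qr}$, up to error terms of the form $\ocost{p}(\ocost{p}+\ocost{q})$. The key step is then to absorb those quadratic error terms using the slack in $H$, namely the linear coefficient $3$ rather than $2$ together with the quadratic part $c\min\set{c,1}$.

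Second, the first term, $d_{pq}\min\set{d_{pr},d_{qr}}$, should be charged to $d_{pq}H(\ocost{r})$ plus leftover charges. I would split into two cases. If $r$ shares the facility of $p,q$, or if $\ocost{r}\ge 1$, then $\min\set{d_{pr},d_{qr}}$ is at most about $\ocost{r}+\min\set{\ocost{p},\ocost{q}}$, plus $1$ in the cross case when $\ocost{r}\ge1$, which is dominated by $3\ocost{r}$. If instead $r$ is at the other facility with $\ocost{r}<1$, then $d_{pr},d_{qr}\approx 1$. In this case the whole triangle is a thin one, with $\alpha=d_{pq}$ small and $\beta,\gamma\approx1$.

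The thin-triangle regime is the extremal one and the check that the constant $3$ is needed. There $\Phi\approx 3\alpha\le 3(\ocost{p}+\ocost{q})$, while the right side is at least $3\ocost{p}d_{qr}+3\ocost{q}d_{pr}\approx 3(\ocost{p}+\ocost{q})$. I would therefore organize the argument around this case. Use $\Phi=2\alpha\beta+\alpha\gamma$ with $\alpha\le d_{pq}$. Bound $\beta,\gamma$ from above by $d_{pr},d_{qr}$. Then use the lower bounds $d_{pr},d_{qr}\ge 1-\ocost{r}-\ocost{p}$ (and the analogous one for $q$) together with $d_{pr}\le\ocost{p}+\ocost{r}+1$, which controls the deficit through the quadratic terms $\ocost{}\min\set{\ocost{},1}$.

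I expect this last reconciliation to be the main obstacle. The terms $H(\ocost{p})d_{qr}$ are needed with $d_{qr}$ bounded below, not above, and when some $\ocost{}$ is comparable to $1$ the lower bound $1-\ocost{r}-\ocost{p}$ degenerates. My first attempt would be a subcase split on whether $\ocost{p}+\ocost{q}+\ocost{r}\le 1$. In the small subcase, run the thin-triangle estimate above. In the large subcase, fall back on the crude bound $\Phi\le 3\alpha\gamma$ with $\gamma\le$ the sum of the $\ocost{}$'s plus $1$. There the quadratic part of $H$, which equals $\ocost{}^2$ or $\ocost{}$, should dominate, since $\sum\ocost{}\ge1$ makes $\min\set{\ocost{},1}$ large enough to pay for the extra $+1$.
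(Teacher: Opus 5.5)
\textbf{There is a genuine gap.} Your transfer step fails, your fallback for the cross case fails, and the regime you yourself identify as extremal is never actually verified.

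\emph{The transfer step.} You bound $d_{pr}c_p\le c_p(c_p+c_q)+c_p d_{qr}$, and symmetrically for the third term. This leaves the error $(c_p+c_q)^2$, which is not multiplied by any edge length. Every term on the right-hand side, by contrast, is an edge length times a function of a cost. That includes both the spare linear charges $c_q d_{pr}+c_p d_{qr}$ and the quadratic parts $d\cdot c\min\{c,1\}$. So no slack in $H$ can absorb this error once the triangle is short compared with the costs.

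Concretely, take $\ofac{1}=0$, $\ofac{2}=1$ on the line and agents at $-\epsilon$, $-\epsilon-\delta$, $-\epsilon-2\delta$. All three share $\ofac{1}$ and every cost is at least $\epsilon$. So for any choice of $p,q$ your chain bounds $\Phi$ by something at least $4\epsilon^2$. The right-hand side, however, is at most $4\delta\, H(\epsilon+2\delta)$. At $\epsilon=0.1$, $\delta=0.01$ this is $0.04$ against about $0.0137$, so the chain cannot close.

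\emph{The fallback in the large subcase.} The same defect reappears here. Your premise $d_{pr},d_{qr}\approx 1$ is false when the costs are near $1/2$. Moreover, replacing $\gamma$ by $\sum c+1$ in $\Phi\le 3\alpha\gamma$ would require $\sum c\min\{c,1\}\ge 3$ when all three edges are equal. Take an equilateral triangle of side $\eta$ in $\reals^{2}$ straddling the perpendicular bisector of $\ofac{1}\ofac{2}$, with $p,q$ on the $\ofac{1}$ side. The costs are about $1/2$, so $\sum c\ge 1$ and you are in your large subcase. Your bound is then $3\eta(\sum c+1)\approx 7.5\eta$, while the right-hand side is about $5.2\eta$.

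\emph{The small subcase.} This is only described. The deficit that appears when the far co-clustered agent has more than twice the cost of the near one is never bounded. That is exactly the delicate part of the thin regime.

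\emph{What the paper does instead.} It keeps actual edge lengths throughout. It trades an edge for costs (e.g.\ $s\le A+B$, $\gamma\le B+C$) only when that edge multiplies another edge.
\begin{itemize}
\item In the one-cluster case this gives a one-line proof from the sorted form $\Phi=2\alpha\beta+\alpha\gamma$. One gets $\Phi\le 2\beta(A+B)+\alpha(B+C)$, and the ordering $\alpha\le\beta\le\gamma$ places every piece on its correct edge with coefficient at most $3$.
\item In the two-cluster case, let $A,B$ be the co-clustered costs, $C$ the singleton cost, and $u\le v$ the cross edges. The paper uses only $s\le A+B$, $v-u\le s$ and $u\ge 1-A-C$. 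It splits on where $s$ sits relative to $u$ and $v$, and obtains explicit residuals such as $3u(A+B-s)+(v-u)(3A-s)+3sC$.
\item The quadratic part of $H$ enters only through the term $uB\min\{B,1\}$, which carries the edge factor $u$. It is invoked only in the leftover subcases $B>2A$ and $u>3A$. There $u\ge 1-A-C$ is used either as a multiplier of $B^2$ or, combined with $u\le A+B$, to get $1-B\le 2A+C$.
\end{itemize}
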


\begin{proof}
Scale so that $\sep=1$ and write
$H(c)=3c+c\min\set{c,1}$.

\xhdr{All three agents in one cluster.}
Label the edges $\dst_{12}=\alpha$, $\dst_{13}=\beta$, $\dst_{23}=\gamma$ and
the optimal costs $A,B,C$.  Since $\alpha\le A+B$ and $\gamma\le B+C$,
\begin{align*}
  \Phi\le2\beta(A+B)+\alpha(B+C)\le3\gamma A+3\beta B+3\alpha C ,
\end{align*}
so the linear part $3c$ of $H$ already suffices.

\xhdr{Exactly two agents in one cluster.}
Let the two co-clustered agents $\{i,j\}$ have optimal costs $A,B$ and mutual distance
$s=d(x_i,x_j)$, and let the singleton $\{k\}$ have cost $C$.  Denote the two cross edges by $d(x_i,x_k) =u\le v = d(x_j,x_k)$,
where $u$ is incident to the agent of cost $A$.  Then
\begin{align}\label{eq:split-tri}
  s\le A+B,\qquad v-u\le s,\qquad u\ge1-A-C .
\end{align}
The right-hand side of \Cref{lem:triple} equals
$v\cdot H(A)+u\cdot H(B)+s\cdot H(C)$.  Write
$\Delta=v-u$ and consider the three possible edge orders.

\emph{(i) $s\le u\le v$.}  Here $\Phi=s(2u+v)$.  Using $s\le A+B$, $3vA+3uB+3sC - \Phi =3u(A+B-s)+\Delta (3A-s)+3sC \ge 3sC-(B-2A)\Delta$, which is
nonnegative if $B\le2A$.  Otherwise $\Delta\le s$, so it suffices that
$uB\min\set{B,1}+3sC\ge s(B-2A)$.  If $B\ge1$ this follows from $u\ge s$.  If
$B<1$ and $3C\ge B-2A$ it is immediate.  In the remaining subcase of $B<1$ and $3C< B-2A$,  
Using \cref{eq:split-tri}, we obtain the lower bound
\begin{align*}
    uB\min\set{B,1}+ s(3C-B+2A) \ge & 
  (1-A-C)B^{2}+(A+B)(3C-B+2A)\\
  = & A(B-B^{2}+2A)+C(3A+3B-B^{2})\ge0 .
\end{align*}

\emph{(ii) $u\le s\le v$.}  Here $\Phi=2us+uv\le u\bigl(2(A+B)+v\bigr)$, and $3vA+3uB+3sC - \Phi \ge u(A+B-u)+\Delta(3A-u)+3sC$, nonnegative when $u\le3A$.
Otherwise, using $\Delta\le s$ and $s\le A+B$ and retaining the extra term
$uB\min\set{B,1}$, the residual is at least $uB\min\set{B,1}+3s(A+C)-u^{2}$.
For $B\ge1$ this is at least $u(B+3A-u) \ge0$; for $B<1$ it is at least
$u\bigl[B^{2}+3(A+C)-u\bigr]\ge u(2A+3C-B+B^{2}) \ge u(2A+3C-1+B) \ge u(2A+3C-2A-C) \ge0$, the last step using $u\le A+B$ together with
$1-B\le2A+C$ and $B<1$, both consequences of \cref{eq:split-tri}.

\emph{(iii) $u\le v\le s$.}  Here $\Phi=2uv+us\le2uv+u(A+B)$ and $3vA+3uB+3sC - \Phi \ge 2u(A+B-u)+\Delta(3A-2u)+3sC$, nonnegative when $3A\ge2u$.
Otherwise $\Delta\le s-u$ and the expression is at least $2u(A+B-u)+(s-u)(3A-2u)+3sC \ge 3A(s-u)+3sC\ge0$.
\end{proof}

\section{Proof of \Cref{thm:prop-interp}}\label{app:prop-interp}

We prove \Cref{thm:prop-interp} for $\OPT(\prof)>0$; the zero-optimum case is
handled in the main text.  Retain the notation of
\Cref{def:dispersion}: clusters $\clus{1},\clus{2}$ with
$\nclus{1}=\setsize{\clus{1}}$, $\nclus{2}=\setsize{\clus{2}}$, centers
$\ofac{1},\ofac{2}$, separation $\sep$, cluster optima $\Rsum_{1},\Rsum_{2}$, and for
$i\in\clus{1}$ the quantities $U_{i}=\sum_{k\in\clus{1}}\dst(\loc{i},\loc{k})$ and
$V_{i}=\sum_{k\in\clus{2}}\dst(\loc{i},\loc{k})$.

\subsection{A fixed-anchor lemma}

\begin{lemma}\label{lem:fixed-anchor}
Let $C_{i}$ be the expected social cost of the Proportional mechanism
conditional on anchoring at $i\in\clus{1}$.  Then
\begin{align}\label{eq:fixed-anchor}
  C_{i}\le 2U_{i}+3\Rsum_{2}-\frac{U_{i}(U_{i}+3\Rsum_{2})}{U_{i}+V_{i}} .
\end{align}
\end{lemma}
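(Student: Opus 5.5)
The plan is to condition on the anchor $i\in\clus{1}$ and bound the conditional cost pair by pair over the second draw, aiming for exactly the right-hand side of \cref{eq:fixed-anchor}. Conditional on anchor $i$, the second agent $j$ is drawn with probability $\dst(\loc{i},\loc{j})/(U_i+V_i)$. The resulting social cost is $C_{ij}=\sum_{k}\min\set{\dst(\loc{k},\loc{i}),\dst(\loc{k},\loc{j})}$.

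First I would clear the denominator. A short computation shows that the right-hand side of \cref{eq:fixed-anchor} equals
\begin{align*}
  \frac{U_i^{2}+2U_iV_i+3\Rsum_2V_i}{U_i+V_i}.
\end{align*}
So the lemma is equivalent to
\begin{align*}
  \sum_{j}\dst(\loc{i},\loc{j})\,C_{ij}\;\le\;U_i^{2}+2U_iV_i+3\Rsum_2V_i .
\end{align*}
I split the left side according to whether $j\in\clus{1}$ or $j\in\clus{2}$, and bound each agent $k$'s contribution by whichever endpoint is convenient.

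\emph{Second agent in $\clus{1}$.} Every agent $k$ pays at most $\dst(\loc{k},\loc{i})$, so $C_{ij}\le U_i+V_i$. Weighting by $\dst(\loc{i},\loc{j})$ and summing over $j\in\clus{1}$ gives at most $U_i^{2}+U_iV_i$.

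\emph{Second agent in $\clus{2}$.} Agents $k\in\clus{1}$ pay at most $\dst(\loc{k},\loc{i})$, which contributes at most $U_iV_i$ after weighting and summing. This already accounts for the $U_i^{2}+2U_iV_i$ part of the target.

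What remains, and what I expect to be the main obstacle, is the cross term
\begin{align*}
  T\deq\sum_{j\in\clus{2}}\sum_{k\in\clus{2}}\dst(\loc{i},\loc{j})\min\set{\dst(\loc{k},\loc{i}),\dst(\loc{k},\loc{j})},
\end{align*}
which must be shown to be at most $3\Rsum_2V_i$. The naive route bounds the minimum by $\dst(\loc{k},\loc{j})\le\ocost{k}+\ocost{j}$. This yields $\Rsum_2V_i+\nclus{2}\sum_{j\in\clus{2}}\dst(\loc{i},\loc{j})\ocost{j}$, and the last sum is not obviously at most $2\Rsum_2V_i$.

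To handle this, I would group the ordered pairs $(j,k)$ and $(k,j)$ together, which is the symmetrisation underlying the factor-$4$ analysis of \citet{lu2010asymptotically}. The paired contribution is
\begin{align*}
  \dst(\loc{i},\loc{j})\min\set{\dst(\loc{k},\loc{i}),\dst(\loc{k},\loc{j})}+\dst(\loc{i},\loc{k})\min\set{\dst(\loc{j},\loc{i}),\dst(\loc{j},\loc{k})}.
\end{align*}
Without loss of generality $\dst(\loc{i},\loc{j})\le\dst(\loc{i},\loc{k})$. In the term carrying the larger weight $\dst(\loc{i},\loc{k})$, I use the minimum $\le\dst(\loc{j},\loc{k})\le\ocost{j}+\ocost{k}$. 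In the term carrying the smaller weight, I swap the roles, using $\min\le\dst(\loc{j},\loc{k})$ and $\dst(\loc{i},\loc{j})\le\dst(\loc{i},\loc{k})$. Together these should give a paired bound of the form $(\ocost{j}+\ocost{k})\bigl(\dst(\loc{i},\loc{j})+\dst(\loc{i},\loc{k})\bigr)$ plus a controlled excess.

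When the pairs are resummed, the term $\sum_{j,k}\ocost{k}\dst(\loc{i},\loc{j})=\Rsum_2V_i$ is harmless. The dangerous term is $\nclus{2}\sum_j\ocost{j}\dst(\loc{i},\loc{j})$. For it, I would expand $\dst(\loc{i},\loc{j})\le\dst(\loc{i},\loc{k})+\ocost{k}+\ocost{j}$, averaged over $k\in\clus{2}$, and then use that $\ofac{2}$ minimises total distance over $\clus{2}$ to absorb the quadratic pieces $\ocost{j}^{2}$ and $\ocost{j}\ocost{k}$ into $2\Rsum_2V_i$. If this averaging fails to close, the fallback is to retain the minimum with $\dst(\loc{k},\loc{i})$ whenever $\dst(\loc{i},\loc{j})$ is large, so that the product is at most $\dst(\loc{i},\loc{j})\dst(\loc{i},\loc{k})$ and is symmetric in $j,k$, and to split into the cases $\dst(\loc{i},\loc{j})\lessgtr\ocost{j}+\ocost{k}$.

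Once $T\le3\Rsum_2V_i$ is established, adding the three contributions gives the cleared inequality. Dividing by $U_i+V_i$ then recovers \cref{eq:fixed-anchor}.
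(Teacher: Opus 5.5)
Your reduction is correct and matches the paper's. The right-hand side clears to $(U_i^2+2U_iV_i+3\Rsum_2V_i)/(U_i+V_i)$. Second draws in $\clus{1}$ contribute at most $U_i(U_i+V_i)$, and agents of $\clus{1}$ under second draws in $\clus{2}$ contribute at most $U_iV_i$.

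\textbf{The gap.} The whole content of the lemma is the bound $T\le3\Rsum_2V_i$, and you leave it unproved. Your main route also cannot prove it. For $j,k\in\clus{2}$ write $v_j=\dst(\loc{i},\loc{j})$, $r_j=\ocost{j}$ and $d_{jk}=\dst(\loc{j},\loc{k})$. The main part of your paired bound, $(r_j+r_k)(v_j+v_k)$, resums to exactly the naive quantity
\[
\sum_{j\neq k}v_j(r_j+r_k)=(\nclus{2}-2)\sum_j r_jv_j+\Rsum_2V_i .
\]
This quantity can exceed $3\Rsum_2V_i$ by a factor of order $\nclus{2}$. So no later averaging or $1$-median argument can recover the target, because the loss happens before those steps.

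\textbf{Counterexample on the line.} Take $\nclus{2}\ge3$. Put $\loc{i}$ in a sufficiently large block of $\clus{1}$ at $0$. Put $\nclus{2}-1$ agents at $\ofac{2}=1$ and one agent $j^*$ at $1+D$, with $D=\nclus{2}/2$. Since $D<\nclus{2}-1$, the pair $\set{0,1}$ is genuinely optimal, and $\ofac{2}$ is a $1$-median of $\clus{2}$. Then $3\Rsum_2V_i=3D(\nclus{2}+D)=\Theta(\nclus{2}^2)$. The naive quantity, however, is at least $(\nclus{2}-2)r_{j^*}v_{j^*}=(\nclus{2}-2)D(1+D)=\Theta(\nclus{2}^3)$. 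The loss sits exactly at the ordered pairs $(j^*,k)$ with $v_{j^*}$ large and $v_k$ small. There $\min\set{v_k,d_{j^*k}}=v_k=1$, while you replaced it by $d_{j^*k}=D$.

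\textbf{Your fallback.} It points the right way, but the threshold $v_j\lessgtr r_j+r_k$ is the wrong one. Suppose all of $\clus{2}$ sits at $\ofac{2}$. Then $\Rsum_2=0$, so the target is $0$. Yet every pair has $v_j>0=r_j+r_k$, so keeping $\min\le v_k$ there produces $\sum_{j,k}v_jv_k>0$.

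\textbf{The missing idea.} Compare $v_j$ with $v_k$, not with the $r$'s. The paper proves the pointwise ordered-pair inequality
\[
  v_j\min\set{v_k,d_{jk}}\le v_jr_k+2r_jv_k .
\]
If $v_j\le2v_k$, the left side is at most $v_jd_{jk}\le v_j(r_j+r_k)\le v_jr_k+2r_jv_k$. If $v_j>2v_k$, then $d_{jk}\ge v_j-v_k>v_k$, so the left side equals $v_jv_k$. In that case
\[
v_jv_k\le2v_k(v_j-v_k)\le2v_k(r_j+r_k)\le2r_jv_k+v_jr_k .
\]
Summing over all ordered pairs in $\clus{2}$ gives $T\le\Rsum_2V_i+2\Rsum_2V_i$ directly. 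This needs no symmetrisation and no use of the $1$-median property.
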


\begin{proof}
For $j,k\in\clus{2}$ write $v_{j}=\dst(\loc{i},\loc{j})$, $r_{j}=\dst(\ofac{2},\loc{j})$ and
$\dst_{jk}=\dst(\loc{j},\loc{k})$.  We claim the ordered-pair inequality
\begin{align}\label{eq:ordered-pair}
  v_{j}\min\set{v_{k},\dst_{jk}}\le v_{j}r_{k}+2r_{j}v_{k}.
\end{align}
If $v_{j}\le2v_{k}$ then
$v_{j}\min\set{v_{k},\dst_{jk}}\le v_{j}\dst_{jk}\le v_{j}(r_{j}+r_{k})\le2r_{j}v_{k}+v_{j}r_{k}$.
If $v_{j}>2v_{k}$ then $\dst_{jk}\ge v_{j}-v_{k}>v_{k}$, so the left-hand side is
$v_{j}v_{k}$, and $r_{j}+r_{k}\ge\dst_{jk}\ge v_{j}-v_{k}$ again gives
\cref{eq:ordered-pair}.  Summing over ordered pairs,
\begin{align}\label{eq:cluster-B}
  \sum_{j\in\clus{2}}v_{j}\sum_{k\in\clus{2}}\min\set{v_{k},\dst_{jk}}\le3\Rsum_{2}V_{i}.
\end{align}
Now condition on the second draw.  If it lands in $A$, serving all agents from
the anchor costs at most $U_{i}+V_{i}$.  If it lands in $B$, cluster $\clus{1}$ costs at
most $U_{i}$ while \cref{eq:cluster-B} controls cluster $\clus{2}$.  Since the second
draw selects $j$ with probability proportional to $\dst(\loc{i},\loc{j})$, the
unnormalized conditional numerator is at most
$U_{i}(U_{i}+V_{i})+U_{i}V_{i}+3\Rsum_{2}V_{i}$; dividing by $U_{i}+V_{i}$ gives
\cref{eq:fixed-anchor}.  The case $U_{i}+V_{i}=0$ is trivial.
\end{proof}

\subsection{Retaining the slack}

Set $\Lambda_{1}=\sum_{i\in\clus{1}}U_{i}(U_{i}+3\Rsum_{2})/(U_{i}+V_{i})$ and define
$\Lambda_{2}$ symmetrically.  Since
$\sum_{i\in\clus{1}}U_{i}\le2\nclus{1}\Rsum_{1}$ and
$\sum_{i\in\clus{2}}U_{i}\le2\nclus{2}\Rsum_{2}$, averaging
\cref{eq:fixed-anchor} over the
uniform anchor gives
\begin{align}\label{eq:prop-avg}
  \Prop\le3\Rsum+\frac{\nclus{1}\Rsum_{1}+\nclus{2}\Rsum_{2}}{n}
    -\frac{\Lambda_{1}+\Lambda_{2}}{n}.
\end{align}
Discarding the last term recovers the factor $4$; we bound it from below
instead.  Let
$\Qsum_{1}=\sum_{i\in\clus{1}}\ocost{i}\min\set{\ocost{i},\sep}$, define
$\Qsum_{2}$ symmetrically, and set $\Qsum=\Qsum_{1}+\Qsum_{2}$.  The claim is
\cref{eq:lambda-lb}.  We prove the first; the second is symmetric.

We first record that each optimal facility is a \emph{$1$-median} of its own
cluster: $\ofac{1}$ minimizes $\sum_{k\in\clus{1}}\dst(\loc{k},z)$ over
$z\in\dom$, since otherwise moving it to a better point for $\clus{1}$, with
$\ofac{2}$ left in place, would lower the social cost and contradict the
optimality of $(\ofac{1},\ofac{2})$; symmetrically for $\ofac{2}$.  In
particular
\begin{align}\label{eq:one-median}
  \sum_{k\in\clus{1}}\dst(\loc{k},z)\;\ge\;\Rsum_{1}
  \qquad\text{for every }z\in\dom .
\end{align}

For $i\in\clus{1}$ this gives
\begin{align}\label{eq:UV-bounds}
  V_{i}\le \nclus{2}(\sep+\ocost{i})+\Rsum_{2},
  \qquad
  U_{i}\ge\Rsum_{1},
  \qquad
  U_{i}\ge \nclus{1}\ocost{i}-\Rsum_{1}.
\end{align}
The middle bound is \cref{eq:one-median} at $z=\loc{i}$.  The other two are
triangle inequalities: $\dst(\loc{i},\loc{k})\le\ocost{i}+\sep+\ocost{k}$ for
$k\in\clus{2}$, summed over the $\nclus{2}$ agents of $\clus{2}$, and
$\dst(\loc{i},\loc{k})\ge\ocost{i}-\ocost{k}$ for $k\in\clus{1}$, summed over
the $\nclus{1}$ agents of $\clus{1}$.
Putting $Q_{i}=\nclus{2}(\sep+\ocost{i})$ and writing
$\lambda_{i}^{1}=U_{i}(U_{i}+3\Rsum_{2})/(U_{i}+V_{i})$,
\begin{align}\label{eq:lambda-i}
  \lambda_{i}^{1}\ge\frac{U_{i}(U_{i}+3\Rsum_{2})}{U_{i}+Q_{i}+\Rsum_{2}}
  \ge\frac{U_{i}^{2}}{U_{i}+Q_{i}},
\end{align}
the second step because
$(U_{i}+3\Rsum_{2})(U_{i}+Q_{i})-U_{i}(U_{i}+Q_{i}+\Rsum_{2})=2U_{i}\Rsum_{2}+3Q_{i}\Rsum_{2}\ge0$.

If $\nclus{2}\ge \nclus{1}$ then $\Qsum_{1}\le\sep\Rsum_{1}$ and $\nclus{2}\ge n/2$, so
$\Lambda_{1}+\nclus{2}\Rsum_{1}\ge \nclus{2}\Rsum_{1}
\ge\frac n2\Rsum_{1}\ge\frac{n\Qsum_{1}}{2\sep}$
and we are done.  Assume therefore $\nclus{1}\ge \nclus{2}$.

\subsection{The scalar certificate for
  \texorpdfstring{$\nclus{1}\ge \nclus{2}$}{n1 >= n2}}

Scale $\sep$ to $1$ and set
\begin{align*}
  r=\Rsum_{1},\qquad z=\ocost{i},\qquad
  m=\max\set{r,\nclus{1}z-r},\qquad s=\min\set{z,1},
\end{align*}
so that \cref{eq:UV-bounds,eq:lambda-i} give
$\lambda_{i}^{1}\ge m^{2}/(m+\nclus{2}(1+z))$.
It suffices to prove the scalar inequality
\begin{align}\label{eq:scalar}
  \frac{m^{2}}{m+\nclus{2}(1+z)}+\nclus{2}z-\frac{\nclus{1}+\nclus{2}}{2}zs
  \;\ge\;\frac{\nclus{1}r}{r+\nclus{2}}\left(\frac r{\nclus{1}}-z\right),
\end{align}
since summing \cref{eq:scalar} over $i\in\clus{1}$ makes the right-hand side vanish,
because $\sum_{i\in\clus{1}}(\Rsum_{1}/\nclus{1}-\ocost{i})=0$, and restoring $\sep$ then
yields the first inequality of \cref{eq:lambda-lb}.

Let $\Delta=\nclus{1}-\nclus{2}\ge0$.  Suppose first $m=r$, so that
$r\ge \nclus{1}z/2$.  Subtracting
the right-hand side of \cref{eq:scalar} factors the residual as
$z\bigl[T-\tfrac{\Delta s}{2}+\nclus{2}(1-s)\bigr]$ with
\begin{align*}
  T=\frac{r\bigl[\Delta r+\nclus{1}\nclus{2}(1+z)\bigr]}
    {(r+\nclus{2})\bigl(r+\nclus{2}(1+z)\bigr)} .
\end{align*}
For $0\le z\le1$ it suffices that $T\ge\Delta z/2$; clearing denominators, the
difference is
$\Delta(2-z)r^{2}
+\nclus{2}r\bigl[2\nclus{1}(1+z)-\Delta z(2+z)\bigr]
-\Delta z\nclus{2}^{2}(1+z)$,
whose bracket is at least $2\nclus{2}(1+z)$, and
$r\ge \nclus{1}z/2\ge\Delta z/2$, so the whole
expression is nonnegative.  For $z\ge1$ it suffices that $T\ge\Delta/2$; the
cleared difference is
$\Delta r^{2}+\nclus{2}r\bigl(2\nclus{2}+(\nclus{1}+\nclus{2})z\bigr)
-\Delta \nclus{2}^{2}(1+z)$,
and using $r\ge \nclus{1}z/2$ it suffices that
$\nclus{1}z\bigl(2\nclus{2}+(\nclus{1}+\nclus{2})z\bigr)
-2\Delta \nclus{2}(1+z)\ge0$, whose left-hand side is increasing
for $z\ge1$ and equals $\nclus{1}^{2}-\nclus{1}\nclus{2}+4\nclus{2}^{2}>0$ at $z=1$.

Suppose instead $m=\nclus{1}z-r$, so $m\ge r$ and $m+r=\nclus{1}z$.  Let
$\varkappa=\Delta s/(2\nclus{1})$.  From
$2\nclus{1}-\Delta s\ge2\nclus{1}-\Delta\ge \nclus{1}+\nclus{2}$,
$m\ge \nclus{1}z/2$, and
$\nclus{1}(\nclus{1}+\nclus{2})\ge4\nclus{2}\Delta$, the last because
$\nclus{1}^{2}-3\nclus{1}\nclus{2}+4\nclus{2}^{2}>0$, we have
$m(2\nclus{1}-\Delta s)\ge \nclus{1}z/2\cdot(\nclus{1}+\nclus{2})
\ge2\nclus{2}\Delta z\ge\Delta s\nclus{2}(1+z)$.
This inequality gives
$m/(m+\nclus{2}(1+z))\ge\varkappa$; the same estimates with
$r\le m$ give
$m/(r+\nclus{2})\ge m/(m+\nclus{2}(1+z))\ge\varkappa$.  Therefore
\begin{align*}
  \frac{m^{2}}{m+\nclus{2}(1+z)}+\frac{rm}{r+\nclus{2}}
  \ge\varkappa(m+r)=\frac{\Delta zs}{2},
\end{align*}
which proves \cref{eq:scalar} because the remaining term $\nclus{2}z(1-s)$ is
nonnegative.

\subsection{Completing the proof}

Adding the two inequalities of \cref{eq:lambda-lb} gives
$\Lambda_{1}+\Lambda_{2}+\nclus{2}\Rsum_{1}+\nclus{1}\Rsum_{2}
\ge n\Qsum/(2\sep)$.  Substituting
into \cref{eq:prop-avg},
\begin{align*}
  \Prop\le3\Rsum+\frac{\nclus{1}\Rsum_{1}+\nclus{2}\Rsum_{2}}{n}
    -\frac{\Lambda_{1}+\Lambda_{2}}{n}
  \le4\Rsum-\frac{\Qsum}{2\sep}
  =\Bigl(4-\frac{\dispersion}{2}\Bigr)\Rsum . \qedhere
\end{align*}
\qed

\subsection{The exact approximation ratio of the Global Pair mechanism}
\label{app:gp-exact}

We establish the matching upper and lower bounds
\begin{align*}
  \ratio_{\dimension}(\GPair)=4
  \qquad\text{for every }\dimension\ge1.
\end{align*}

\begin{proof}
For the upper bound, the interpolation inequality in \Cref{thm:gp-interp},
together with $\dispersion\le1$, gives
\begin{align*}
  \SC(\GPair,\prof)
  \le(3+\dispersion)\OPT(\prof)
  \le4\OPT(\prof).
\end{align*}
Every finite Euclidean profile admits an optimal two-facility solution, so
$\ratio_{\dimension}(\GPair)\le4$.

For the lower bound, let $m\ge1$ and consider the profile
\begin{align*}
  \prof^{m}=\lranglefix{0^{m},1^{m},2},
\end{align*}
with $m$ agents at $0$, $m$ agents at $1$, and one agent at $2$.  This profile
embeds isometrically into $\reals^{\dimension}$ for every $\dimension\ge1$.

We first show that $\OPT(\prof^{m})=1$.  Placing facilities at $0$ and $1$
yields social cost $1$, because only the agent at $2$ incurs positive cost.
For the reverse inequality, consider any two facility locations and assign
each occupied location to a nearest facility.  By the pigeonhole principle,
two occupied locations, say $x$ and $y$, are assigned to the same facility
$\ell$.  If their multiplicities are $a,b\ge1$, then their total contribution
is at least
\begin{align*}
  a\dst(x,\ell)+b\dst(y,\ell)
  &\ge\min\set{a,b}
    \bigl(\dst(x,\ell)+\dst(y,\ell)\bigr) \\
  &\ge\min\set{a,b}\dst(x,y)
   \ge1.
\end{align*}
Thus every two-facility solution has social cost at least $1$, even if the
facilities lie off the line in a higher-dimensional space.

We next compute the expected social cost of the Global Pair mechanism.
Co-located pairs have weight zero, and the remaining unordered pairs fall into
three types:
\begin{itemize}
  \item There are $m^{2}$ pairs between $0$ and $1$.  Each has weight $1$ and
        produces social cost $1$.
  \item There are $m$ pairs between $0$ and $2$.  Each has weight $2$ and
        produces social cost $m$.
  \item There are $m$ pairs between $1$ and $2$.  Each has weight $1$ and
        produces social cost $m$.
\end{itemize}
Hence the total pair weight is
\begin{align*}
  \pw_m=m^{2}+2m+m=m^{2}+3m,
\end{align*}
and the corresponding weighted sum of social costs---the numerator of the
expectation---is
\begin{align*}
  \pw_m\SC(\GPair,\prof^{m})
  &=m^{2}\cdot1\cdot1
    +m\cdot2\cdot m
    +m\cdot1\cdot m \\
  &=4m^{2}.
\end{align*}
Therefore, using $\OPT(\prof^{m})=1$,
\begin{align*}
  \frac{\SC(\GPair,\prof^{m})}{\OPT(\prof^{m})}
  =\frac{4m^{2}}{m^{2}+3m}
  =\frac{4m}{m+3}
  \longrightarrow4.
\end{align*}
Taking the supremum over profiles gives $\ratio_{\dimension}(\GPair)\ge4$.
Together with the upper bound, this proves
$\ratio_{\dimension}(\GPair)=4$.
\end{proof}

\subsection{Limits of the mixture family}
\label{app:mixture-limits}

\Cref{thm:mixture-exact} identifies the exact ratio at the mixing weight
$\lambda=2/3$, but it does not determine the best mixing weight over the whole
family.  We now give a family-wide lower bound using one concentrated-profile
family and one spread-out-profile family.  Define the sharp coefficient
$\kcon$, the balancing weight $\lamstar$, and the balanced frontier value
$\rhostar$ by
\begin{align*}
  \kcon
    &\deq4\sqrt3-6=0.9282\ldots, \\
  \lamstar
    &\deq\frac{1}{1+\kcon}
     =\frac{1}{4\sqrt3-5}
     =\frac{5+4\sqrt3}{23}
     =0.5186\ldots, \\
  \rhostar
    &\deq3+\lamstar=\frac{74+4\sqrt3}{23}=3.5186\ldots.
\end{align*}
Define the lower-frontier function $L\colon[0,1]\to\reals$ by
\begin{align*}
  L(\lambda)\deq\max\set{3+\lambda,\;4-\kcon\cdot\lambda}.
\end{align*}

\begin{proposition}[Lower frontier]\label{prop:frontier}
For every mixing weight $\lambda\in[0,1]$, the worst-case approximation ratio
$\ratio(\Mix{\lambda})$ of the fixed mixture $\Mix{\lambda}$ satisfies
\begin{align*}
  \ratio(\Mix{\lambda})\ge L(\lambda).
\end{align*}
Consequently, the optimal worst-case ratio over all fixed mixing weights satisfies
\begin{align*}
  \inf_{\lambda\in[0,1]}\ratio(\Mix{\lambda})\ge\rhostar.
\end{align*}
\end{proposition}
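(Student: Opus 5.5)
Since $\SC(\Mix{\lambda},\prof)=\lambda\,\SC(\Prop,\prof)+(1-\lambda)\,\SC(\GPair,\prof)$ is linear in $\lambda$, I will prove $\ratio(\Mix{\lambda})\ge L(\lambda)$ by exhibiting two profile families and computing both mechanisms' limiting ratios on each. Along the family, if $\Prop/\OPT\to p$ and $\GPair/\OPT\to g$, then $\ratio(\Mix{\lambda})\ge\lambda p+(1-\lambda)g$.

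\textbf{Concentrated family.} Reuse $\prof^{m}=\lranglefix{0^{m},\varepsilon,1}$ from the proof of \Cref{thm:mixture-exact}. There $\Prop/\OPT\to4$ and $\GPair/\OPT\to3$, so the mixture ratio tends to $3+\lambda$. This gives the first branch of $L$ for every $\lambda$.

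\textbf{Spread-out family.} I need limiting ratios $\GPair/\OPT\to4$ and $\Prop/\OPT\to4-\kcon$, because then the mixture tends to $(1-\lambda)4+\lambda(4-\kcon)=4-\kcon\lambda$. The natural starting point is the family from \Cref{app:gp-exact}, $\lranglefix{0^{m},1^{m},2}$, where $\GPair/\OPT\to4$. That family is unlikely to give the sharp constant $4\sqrt3-6$, whose irrational value suggests an optimization over a one-parameter family. So I would generalize it to $\lranglefix{0^{m},1^{m},y^{\lfloor\beta m\rfloor}}$, or more likely to $0^{m}$, $1^{m}$, and one outlier at distance $t$, together with a block-size ratio. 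I would compute $\OPT$ (pigeonhole as in \Cref{app:gp-exact}), the Global Pair numerator and denominator, and the Proportional cost by conditioning on the anchor as in the proof of \Cref{thm:mixture-exact}. Letting $m\to\infty$ gives a rational function of the parameter. I would then choose the parameter so that $\GPair/\OPT\to4$ still holds, or, more generally, maximize $\lambda p+(1-\lambda)g$ jointly; the branch $4-\kcon\lambda$ only needs to be valid at $\lambda=\lamstar$ and at $\lambda=0$ to pin down its line.

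A simpler route is to note that $L$ only matters through its maximum. It therefore suffices to find one family with $\lamstar p+(1-\lamstar)g\ge\rhostar$, combined with the first family via $\inf_\lambda\max\ge$ the value at the crossing. For the pointwise statement $\ratio(\Mix{\lambda})\ge L(\lambda)$, however, I need $p\to4-\kcon$ and $g\to4$ exactly.

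\textbf{Consequence.} Since $3+\lambda$ is increasing and $4-\kcon\lambda$ is decreasing, $L$ is minimized at their crossing $\lambda=1/(1+\kcon)=\lamstar$, where it equals $\rhostar$. This gives $\inf_\lambda\ratio(\Mix{\lambda})\ge\rhostar$.

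\textbf{Main obstacle.} The hard step is designing the spread-out family and carrying out the limiting Proportional computation so that $4\sqrt3-6$ emerges as an exact optimum. I expect a quadratic such as $x^{2}+12x-12=0$, whose positive root is $4\sqrt3-6$, to appear when optimizing a ratio like $(a+bx)/(c+x)$ in the position or size parameter. I would first try an outlier position or block ratio as that parameter and differentiate the limiting ratio.
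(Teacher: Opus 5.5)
Your concentrated branch and the crossing-point argument are what the paper does. However, the second branch $4-\kcon\lambda$ is not proved: you never fix a spread-out family or compute either mechanism on one, and you flag this yourself as the main obstacle. That branch carries the entire constant $\kcon=4\sqrt3-6$, and hence $\rhostar$, so the proposition is not established.

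The parameters you propose to vary first would also not work. The equal-block family $\lranglefix{0^{m},1^{m},2}$ gives $\Prop/\OPT\to3$. That yields only the weaker branch $4-\lambda$ and hence $\inf_\lambda\ratio(\Mix{\lambda})\ge7/2$. Moving the single outlier to $1+s$ makes the Global Pair limit $3+\min\set{s,1/s}$. Replacing the outlier by a block of $\beta m\le m$ agents at $2$ makes the Global Pair ratio exactly $4/(1+3\beta)$. Either way you lose $g\to4$, which you correctly note is required for the pointwise bound.

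The missing idea is to keep a single outlier at $2$ and unbalance the two endpoint blocks. Take $\lranglefix{0^{A},1^{B},2}$ with $A,B\to\infty$ and $A/B\to t$.

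\begin{itemize}
\item The pigeonhole argument still gives $\OPT=1$.
\item The Global Pair cost is $4AB/(AB+2A+B)\to4$ for every $t$.
\item For Proportional, conditioning on the anchor gives conditional social costs $3B/(B+2)$, $2A/(A+1)$ and $3AB/(2A+B)$ for anchors at $0$, $1$, $2$. These are averaged with weights $A,B,1$ over $A+B+1$. In the limit this is $\psi_{\Prop}(t)=(6t^{2}+10t+2)/(2t^{2}+3t+1)$.
\item The derivative's numerator is $-2t^{2}+4t+4$, so the stationary equation is $t^{2}-2t-2=0$ in the block ratio. The maximizer is $t_*=1+\sqrt3$, and $\psi_{\Prop}(t_*)=10-4\sqrt3=4-\kcon$.
\end{itemize}

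This family gives $g\to4$ and $p\to4-\kcon$ simultaneously, so the mixture tends to $4-\kcon\lambda$ for every $\lambda$. Your concentrated branch and crossing computation then finish the proof.
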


\begin{proof}
Fix a mixing weight $\lambda\in[0,1]$.

\xhdr{The concentrated-profile branch.}
For every integer $m\ge2$, define the perturbation
$\varepsilon_m\deq m^{-2}$ and define the concentrated profile
$\prof_{\mathrm{c}}^{m}$ to have $m$ agents at $0$, one agent at
$\varepsilon_m$, and one agent at $1$; that is,
\begin{align*}
  \prof_{\mathrm{c}}^{m}
  \deq\lranglefix{0^{m},\varepsilon_m,1}.
\end{align*}
The optimal social cost of the concentrated profile is
$\OPT(\prof_{\mathrm{c}}^{m})=\varepsilon_m$.  Indeed, facilities at $0$ and
$1$ attain cost $\varepsilon_m$; conversely, two of the three occupied
locations must be assigned to the same facility, and their contribution is at
least their separation by the triangle inequality, which is at least
$\varepsilon_m$.  Conditioning on the Proportional anchor and enumerating the
Global Pair selections gives the normalized mechanism costs
\begin{align*}
  \frac{\SC(\Prop,\prof_{\mathrm{c}}^{m})}
       {\OPT(\prof_{\mathrm{c}}^{m})}
  &=\frac{m}{m+2}\cdot\left[
      \frac{2-\varepsilon_m}{1+\varepsilon_m}
      +\frac{2\cdot(1-\varepsilon_m)}{m\cdot\varepsilon_m+1-\varepsilon_m}
      +\frac{2-\varepsilon_m}{m+1-\varepsilon_m}
    \right]
    \longrightarrow4, \\
  \frac{\SC(\GPair,\prof_{\mathrm{c}}^{m})}
       {\OPT(\prof_{\mathrm{c}}^{m})}
  &=\frac{m\cdot(3-2\varepsilon_m)}
          {m\cdot(1+\varepsilon_m)+1-\varepsilon_m}
    \longrightarrow3.
\end{align*}
Therefore, the normalized cost of the fixed mixture on the concentrated
profile satisfies
\begin{align*}
  \frac{\SC(\Mix{\lambda},\prof_{\mathrm{c}}^{m})}
       {\OPT(\prof_{\mathrm{c}}^{m})}
  &\longrightarrow
    4\cdot\lambda+3\cdot(1-\lambda)
   =3+\lambda.
\end{align*}
Taking the supremum over profiles proves the concentrated-profile lower bound
\begin{align}\label{eq:frontier-concentrated}
  \ratio(\Mix{\lambda})\ge3+\lambda.
\end{align}

\xhdr{The spread-out-profile branch.}
For positive integer masses $A$ and $B$, define the spread-out profile
$\prof_{\mathrm{s}}^{A,B}$ to have mass $A$ at $0$, mass $B$ at $1$, and one
agent at $2$; that is,
\begin{align*}
  \prof_{\mathrm{s}}^{A,B}
  \deq\lranglefix{0^{A},1^{B},2}.
\end{align*}
The optimal social cost of the spread-out profile is
$\OPT(\prof_{\mathrm{s}}^{A,B})=1$.  Indeed, facilities at $0$ and $1$ attain
cost $1$.  Conversely, assign each of the three occupied locations to a nearer
facility.  Two occupied locations must be assigned to the same facility; since
each has mass at least $1$ and their separation is at least $1$, the triangle
inequality shows that their combined cost is at least $1$.  Enumerating the
weighted pairs and the Proportional anchors gives the expected mechanism costs
on the spread-out profile:
\begin{align*}
  \SC(\GPair,\prof_{\mathrm{s}}^{A,B})
  &=\frac{4A\cdot B}{A\cdot B+2A+B}, \\
  \SC(\Prop,\prof_{\mathrm{s}}^{A,B})
  &=\frac{1}{A+B+1}\cdot\left[
      \frac{3A\cdot B}{B+2}
      +\frac{2A\cdot B}{A+1}
      +\frac{3A\cdot B}{2A+B}
    \right].
\end{align*}

Fix a mass-ratio parameter $t>0$ and let the positive integer masses $A$ and
$B$ tend to infinity subject to $A/B\to t$.  Define the limiting
Proportional-cost function $\psi_{\Prop}\colon(0,\infty)\to\reals$ by
\begin{align*}
  \psi_{\Prop}(t)
  \deq\frac{6t^{2}+10t+2}{2t^{2}+3t+1}.
\end{align*}
The expected mechanism costs on the spread-out profile then satisfy
\begin{align}\label{eq:family-spread}
  \SC(\GPair,\prof_{\mathrm{s}}^{A,B})&\longrightarrow4,
  &
  \SC(\Prop,\prof_{\mathrm{s}}^{A,B})&\longrightarrow\psi_{\Prop}(t).
\end{align}
The derivative of the limiting Proportional-cost function is
\begin{align*}
  \psi_{\Prop}'(t)
  =\frac{-2t^{2}+4t+4}{(2t^{2}+3t+1)^{2}}.
\end{align*}
Hence the limiting Proportional-cost function is uniquely maximized at the
mass ratio $t_{*}\deq1+\sqrt3$, where
\begin{align*}
  \psi_{\Prop}(t_{*})=10-4\sqrt3.
\end{align*}
Along any sequence of spread-out profiles whose mass ratio tends to $t_{*}$,
the normalized mixture cost therefore converges to
\begin{align*}
  \frac{\SC(\Mix{\lambda},\prof_{\mathrm{s}}^{A,B})}
       {\OPT(\prof_{\mathrm{s}}^{A,B})}
  &\longrightarrow
    \lambda\cdot\psi_{\Prop}(t_{*})+4\cdot(1-\lambda) \\
  &=\lambda\cdot(10-4\sqrt3)+4\cdot(1-\lambda)
   =4-\kcon\cdot\lambda.
\end{align*}
Taking the supremum over profiles proves the spread-out-profile lower bound
\begin{align}\label{eq:frontier-spread}
  \ratio(\Mix{\lambda})\ge4-\kcon\cdot\lambda.
\end{align}

Combining the concentrated-profile bound \cref{eq:frontier-concentrated} and
the spread-out-profile bound \cref{eq:frontier-spread} gives
$\ratio(\Mix{\lambda})\ge L(\lambda)$.  Finally, the concentrated-profile
branch $3+\lambda$ is increasing in the mixing weight $\lambda$, whereas the
spread-out-profile branch $4-\kcon\cdot\lambda$ is decreasing in the mixing
weight $\lambda$.  The two branches meet at the balancing weight
$\lamstar\in(0,1)$, and their common value is the balanced frontier value
$\rhostar$.  Thus the lower-frontier function $L(\lambda)$ is minimized at the
balancing weight $\lamstar$ with value $\rhostar$.  Therefore, the worst-case
approximation ratio $\ratio(\Mix{\lambda})$ is at least $\rhostar$ for every
mixing weight $\lambda\in[0,1]$; taking the infimum over the mixing weight
proves the claimed consequence.
\end{proof}

The two profile families expose the complementary failure modes directly.  The
concentrated-profile family constrains large mixing weights $\lambda$ through
the concentrated-profile branch $3+\lambda$, whereas the spread-out-profile
family constrains small mixing weights $\lambda$ through the spread-out-profile
branch $4-\kcon\cdot\lambda$.  Combining the exact $11/3$ upper bound in
\Cref{thm:mixture-exact} with the lower-frontier bound in \Cref{prop:frontier},
we therefore know that the best ratio within the mixture family lies between
the balanced frontier value $\rhostar=3.5186\ldots$ and $11/3=3.6667\ldots$.
The exact optimizing weight and the exact value within this interval remain
open.
 
\end{document}